\documentclass[runningheads]{llncs}
\usepackage[T1]{fontenc}

\usepackage{times}  
\usepackage{helvet} 
\usepackage{courier}  
\usepackage[hyphens]{url} 
\usepackage{graphicx} 
\usepackage{caption} 
\usepackage{latexsym}
\usepackage{amssymb}

\usepackage{amsmath}
\usepackage{amsthm}
\usepackage{booktabs}
\usepackage{enumitem}
\usepackage{graphicx}
\usepackage{tikz}
\usepackage{tikz-network}
\usepackage{tabularx}
\usepackage{color}
\usepackage{xcolor}
\usepackage{todonotes}
\usepackage{algorithm, algpseudocode}
\usepackage{esvect}
\usepackage{siunitx}
\usepackage{booktabs}
\usepackage{multirow}
\usepackage{dblfloatfix}
\usepackage{pdflscape}
\usepackage[dvipsnames]{xcolor}

\usepackage{newfloat}
\usepackage{listings}
\usepackage{authblk}

\newcommand{\tree}{\mathcal{T}}
\newcommand{\nodes}{\mathcal{N}}
\newcommand{\edges}{\mathcal{E}}
\newcommand{\items}{\mathcal{G}}

\newcommand{\children}{\mathcal{C}}
\newcommand{\parent}{\mathcal{P}}
\newcommand{\leaves}{\mathcal{L}}
\newcommand{\internal}{\mathcal{I}}
\newcommand{\ancestors}{Anc}

\newcommand{\mallocations}{\mathrm{\Pi}}
\newcommand{\allocations}{\mathcal{A}}

\newenvironment{proofsketch}{
  \noindent \textit{Proof sketch.}
}{}

\begin{document}
\title{Multilevel Fair Allocation under Additive Preferences}
\author{
Maxime Lucet
\and
Nawal Benabbou \and
Aur{\'e}lie Beynier \and
Nicolas Maudet\\
\email
\{firstname.lastname\}@lip6.fr
}
% \author{
% Maxime Lucet\orcidID{0009-0003-0774-5443}
% \and
% Nawal Benabbou\orcidID{0000-0002-4589-4162}\and
% Aurélie Beynier\orcidID{0000-0002-3812-6755}\and
% Nicolas Maudet\orcidID{0000-0002-4232-069X}\\
% \emails
% \{firstname.lastname\}@lip6.fr
% }
%
\authorrunning{M. Lucet et al.}

% \author{Paper ID 8}
% First names are abbreviated in the running head.
% If there are more than two authors, 'et al.' is used.
%
% \institute{LIP6, CNRS, Sorbonne Université}
\institute{LIP6, CNRS, Sorbonne Université, F-75005 Paris, France}
\maketitle              % typeset the header of the contribution
\begin{abstract}
We study multilevel fair resource allocation with tree-structured hierarchical relations among agents. At each level, the problem can be viewed locally as allocating an agent's bundle to its children, the overall allocation being a trace of this process iterated down to the leaves.
% with each internal node potentially having its own local allocation mechanism. 
Assuming that internal nodes' utilities are the utilitarian welfare of their children, and the leaves have classical additive utilities over items, we first propose multilevel adaptations of usual envy-based fairness notions (e.g., WEF1). We present three adaptations and show that the choice among them is not neutral. We prove that, under identical preferences, the three adapted envy-based notions coincide, and that the Multilevel extension of Weighted Round Robin \cite{Chakraborty_et_al_2020} (MWRR) guarantees them. We then show that under general preferences, MWRR  may guarantee some notions while failing others. Finally, through experiments, we show that MWRR may still perform well even for adaptations it does not formally guarantee.

\keywords{Fair division  \and Computational Social Choice \and Resource allocation.}
\end{abstract}

\section{Introduction}

Fair allocation has traditionally focused on designing algorithms to allocate fairly resources between individual agents \cite{Caragiannis_et_al2019,Lipton_et_al2004}, and was then later extended to groups \cite{Benabbou_et_al_2019,Chakraborty_et_al_2020,Kyropoulou_2019,Aleksandrov_Walsh_2018,GrossHumbert_et_al2023}. However, in many real-world settings, agents/groups actually belong to higher level entities, ultimately forming a hierarchical structure. Such hierarchy can be naturally modelled by a directed tree, where the root is responsible for allocating the bundle of items it receives to its children. This new setting, called multilevel fair allocation, was recently formalized in \cite{Lucet_et_al2025}. 

Multilevel fair allocation problems can arise in multiple real-world situations. For instance, consider a food charity operating in a country with multiple administrative levels: regional branches oversee departmental units, which in turn manage city-level distribution centres with potentially heterogeneous supply needs (see Fig.~\ref{fig:initial_tree} for  an illustration).
%\color{black} Hence, the food charity operates as a hierarchical structure (partly) represented in Fig.~\ref{fig:initial_tree}. 
In such a setting, fairness should be enforced locally at each level of the hierarchy (e.g., among departments within a region or among cities within a department), rather than across levels, as entities at different levels are not directly comparable. Note that many allocation problems across territories, or within a hierarchical structure, can be covered by this multilevel setting.

Interestingly, this setting can be useful even in contexts where no explicit hierarchy exists. For instance, when assigning medical resources to patients characterized by severity, age, and risk level, one may first prioritize fairness across severity (e.g., giving higher priority to the most severe cases), then within severe cases place greater weight on high-risk patients, while for less severe cases one may instead balance allocation across age groups to avoid age bias (see Fig. 2 for an illustration). Such implicit hierarchy may also arise in affirmative action schemes.

\vspace{-0.55cm}
\begin{figure}[h!]
\begin{minipage}{0.48\linewidth}
\centering
\resizebox{\linewidth}{!}{
    \begin{tikzpicture}[
        level distance=10mm,
        every node/.style={rectangle, rounded corners=4pt, draw, 
                           inner sep=6pt, font=\normalsize},
        level 1/.style={sibling distance=60mm},
        level 2/.style={sibling distance=30mm},
        edge from parent/.style={draw, -stealth}
    ]
    
    \node {1: Food charity}
        child { node {2: RegionA}
            child { node {4: CityA} }
            child { node {5: CityB} }
        }
        child { node {3: RegionB}
            child { node {6: CityC} }
            child { node {7: CityD} }
        };
    
    \end{tikzpicture}
    }
    \caption{\small Hierarchical structure of a food charity.}
    \label{fig:initial_tree}
\end{minipage}
\hfill
\begin{minipage}{0.48\linewidth}
\centering
\resizebox{\linewidth}{!}{
\begin{tikzpicture}[
        level distance=10mm,
        every node/.style={rectangle, rounded corners=4pt, draw, 
                           inner sep=6pt, font=\normalsize},
        level 1/.style={sibling distance=60mm},
        level 2/.style={sibling distance=25mm},
        edge from parent/.style={draw, -stealth}
    ]
    
    \node {Hospital}
        child { node {Low-severity}
            child { node {$<18$ years} }
            child { node {$18-45$ years} }
            child { node {$>45$ years} }
        }
        child { node {High-severity}
            child { node {High-risk} }
            child { node {Low-risk} }
        };
    
    \end{tikzpicture}
    }
    \caption{\small Example of implicit hierarchy.}
    \label{fig:implicit_hierarchy}
\end{minipage}
\end{figure}
\vspace{-0.7cm}

In this paper, we assume that intermediary nodes act solely as representative bodies of their constituents (as in \cite{Lucet_et_al2025}), and therefore receive items only to allocate them to their children. 
These entities derive utility from the satisfaction of their constituents\footnote{Other assumptions could have been made, e.g., they could be oblivious of how the items are allocated to their children, or only care about the properties of the allocation to their children.}, and we adopt the utilitarian social welfare as an aggregation rule (following \cite{Lucet_et_al2025}), which is a standard modelling choice in collective decision-making.  However, while \cite{Lucet_et_al2025} assume matroid-rank utilities at the leaves, we assume additive preferences. The additive utility model is the most studied preference model in the literature and is more expressive over singletons than matroid-rank utilities. Coming back to our illustrative example, this assumption is natural in the context of food charities (see e.g., \cite{Aggarwal_et_al2024}).

Moreover, we further depart from \cite{Lucet_et_al2025} by considering different notions of fairness. More precisely, they consider welfare-based fairness notions (e.g., Lorenz dominance) ensuring that, under matroid-rank utilities, a utilitarian-optimal solution is also fair. In contrast, we consider envy-based fairness notions. 
A central notion of fairness in classical fair division is \emph{envy-freeness} (EF), together with its relaxations: \emph{envy-freeness up to one good} (EF1) \cite{Caragiannis_et_al2019} and \emph{envy-freeness up to any good} (EFX). Recent works have also examined \emph{weighted} variants of these criteria \cite{Chakraborty_et_al_2020,Chakraborty_et_al2024}, namely WEF1 and WEFX. The general idea behind these notions is that no agent should envy the bundle received by another agent, after accounting for the weight differences. In the multilevel setting, envy-based notions require some adaptations. 
 Indeed, an internal node does not observe how an alternative bundle would be allocated among its children, and thus cannot compute the exact utilitarian welfare it would derive from it, but only an estimate. 
In this paper, we consider three possible estimation functions (pessimistic, agnostic, and optimistic), and we investigate whether a multilevel version of Weighted Round Robin (WRR) \cite{Chakraborty_et_al_2020} can guarantee some of our fairness notions together with completeness (i.e., all items are allocated). In monolevel settings, the Round Robin \cite{Caragiannis_et_al2019} and WRR are indeed simple algorithms achieving EF1 and WEF1, respectively.

\vspace{0.1cm}
\noindent \textbf{Related work.} \emph{Group fairness} has become a highly active topic in fair division \cite{Benabbou_et_al_2019,Chakraborty_et_al_2020,Kyropoulou_2019,Aleksandrov_Walsh_2018,GrossHumbert_et_al2023}, and was recently extended to multilevel settings by \cite{Lucet_et_al2025}. While they focus on matroid-rank valuations and fairness notions unrelated to envy, we consider additive valuations and envy-based fairness notions in multilevel settings. 
However, some previous works can be seen as bilevel settings. \cite{Scarlett_et_al2023} study individual and group envy-freeness when groups’ valuations depend on those of their members, and propose algorithms guaranteeing fairness at both levels. Our work extends this setting, as well as some of their algorithmic ideas. Closely related, \cite{Bu_et_al2024} reconcile group and individual perspectives, allowing group preferences that do not necessarily aggregate those of their members (unlike our assumption). However, their model is limited to two levels, whereas ours allows an arbitrary number of levels. Finally, \cite{Aggarwal_et_al2024} study a bilevel allocation problem motivated by food charities, but focus on auction mechanisms, in contrast with our setting.
Our multilevel model is also related to the multilevel apportionment framework of \cite{Schmidt_et_al2025}, which uses a similar tree-based hierarchy. However, their work focuses specifically on apportionment, a very specific subfield of resource allocation.
Finally, the multilevel fairness notions proposed in this paper induce some visibility constraints, which is reminiscent of work on local fairness \cite{Abebe_et_al_2017,Beynier_et_al2019,BREDERECK_2022}; for instance, in our model, only labs affiliated to the same department can compare their situations.

\vspace{0.1cm}
\noindent \textbf{Contributions.} In this paper, we formally study multilevel fair allocation problems, assuming that internal nodes follow a utilitarian welfare objective and that leaves have additive preferences, and focusing on weighted envy-based fairness notions. We first show that WEF1 must be adapted to the multilevel setting, and that the choice of adaptation is not neutral. We propose three adaptations, ranging from the weakest to the strongest: pessimistic, agnostic, and optimistic. Under identical preferences, our three adaptations coincide and can be guaranteed by a multilevel extension of the Weighted Round Robin \cite{Chakraborty_et_al_2020} (MWRR). However, results become more nuanced under general additive preferences. Although the MWRR guarantees the pessimistic adaptation, it fails to satisfy the stronger ones. Nevertheless, we show in Section~\ref{subsection: numerical results} that the algorithm still performs extremely well experimentally with respect to the agnostic fairness notion.

\vspace{-0.2cm}
\section{Model}
\vspace{-0.2cm}

In this paper, we consider an allocation problem where a set of goods $\items = \{g_1, \ldots, g_m\}$ must be distributed among agents organized in a hierarchical structure.
\vspace{0.1cm}

\noindent \textbf{Hierarchical structure.} We consider a multilevel allocation problem represented by an {\it arborescence} (i.e. a directed rooted tree in which all edges point away from the root) denoted by $\tree = (\nodes, \edges)$, where $\nodes = \{1, \ldots, n\}$ is the set of nodes representing agents and $\edges$ is the set of arrows representing hierarchical relations among agents. We assume that the nodes are indexed according to a topological ordering of $\tree$ (hence the root of $\tree$ is node $1$). We also assume that any node $i \in \nodes$ has a weight $w_i \in \mathbb{R}_{> 0}$, which can be arbitrarily fixed, or can depend on the structure of the tree. For any node $i \in \nodes$, let $h(i)$ be the height of $i$, i.e. the number of arrows in a longest path starting at node $i$. Let $\children(i)$ denote the set of children of $i$, which is defined by $\children(i)\! = \!\{j \!\in \!\nodes \!: \!(i, j) \!\in \!\edges\}$. Let $\parent(i)$ be the parent of $i$, which is the unique node such that $(\parent(i),i) \in \edges$, and $\parent(1) = \emptyset$. Moreover, let $\ancestors(i)$ be the set of its ancestors, i.e. nodes belonging to the unique path from $1$ to $i$ (including $i$ itself). For any node $i$, let $\tree_i = (\nodes_i, \edges_i)$ denote the subtree of $\tree$ rooted at $i$, consisting of all nodes and edges belonging to paths that start at $i$. Let $\leaves(i)$ be the set of leaves of $\tree_i$, formally defined as $\leaves(i) = \{j \in \nodes_i : \children(j) = \emptyset\}$. Let $\internal(i)$ denote the set of internal nodes of $\tree_i$, defined by $\internal(i) = \nodes_i \setminus \leaves(i)$. In particular, $\leaves$ and $\internal$ denote the sets of leaves and internal nodes of the entire tree. For brevity, we write $\internal := \internal(1)$ and $\leaves := \leaves(1)$ throughout the paper, and use $\internal(i)$ and $\leaves(i)$ when referring to a specific subtree.

\vspace{0.1cm}
\noindent \textbf{Allocations.} We introduce different types of allocations relevant to our setting:

\begin{definition}[Complete multilevel allocation]
$\pi : \nodes \rightarrow 2^\items$ is a \emph{multilevel allocation} if it satisfies: (i) $\pi(1) = \items$, (ii) $\pi(i) \supseteq \cup_{j \in \children(i)} \pi(j), \forall i \in \internal$, and (iii) $\pi(i) \cap \pi(j) \! =\! \emptyset,  \forall i, j \in \nodes$ such that $\parent(i)\! = \! \parent(j)$, where $\pi(i)$ denotes the bundle of any node $i \in \nodes$. A multilevel allocation $\pi$ is \emph{complete} if condition (ii) is an equality. 
\end{definition}

\noindent We require that the root owns all items (i.e. $\pi(1) = \items$) only to ensure that none are discarded a priori. Moreover, we require that each internal node allocate each of its item to at most one child.
The set of all multilevel allocations is denoted by $\mallocations$ hereafter.

\begin{definition}[Restricted multilevel allocation] Given a multilevel allocation $\pi \in  \mallocations$ and a set of nodes $N \subseteq \nodes$, the \emph{restriction of $\pi $ to the nodes in $N$} is denoted by $\pi |_N$ and defined by $\pi |_N   =  (\pi(i))_{i \in N}$.
\end{definition}

\begin{definition}[Local allocation] Given a set of nodes $N \subseteq  \nodes$ and a set of items $S  \subseteq  \items$, $A  :  N  \rightarrow  2^S$ is a \emph{local allocation} if (i) $\cup_{i \in N} A(i)  \subseteq  S$ and (ii) $A(i) \cap A(j)  =  \emptyset$ for all $i, j  \in  N$. Such a local allocation $A$ is complete if condition (i) is an equality. 
\end{definition}

\noindent For any $N  \subseteq  \nodes$ and $S  \subseteq  \items$, let $\allocations^S_N$ denote the set of corresponding complete local allocations. Note that, for any complete multilevel allocation $\pi  \in  \mallocations$ and any node $i  \in  \internal$, the restricted allocation $\pi |_{\children(i)}$ is a complete local allocation belonging to $\allocations^{\pi(i)}_{\children(i)}$.

\vspace{0.1cm}

\noindent \textbf{Utility model.} Let $v_i : \mallocations \rightarrow \mathbb{R}_{\geq 0}$ be the utility function of node $i  \in  \nodes$, and let $v =(v_i)_{i \in \nodes}$. For any internal node $i  \in  \internal$ and any multilevel allocation $\pi  \in \mallocations$, $v_i(\pi)$ quantifies some concept of overall welfare derived by the children of $i$ from $\pi $.  We focus here on the utilitarian social welfare, i.e.
\setlength{\abovedisplayskip}{3pt}
\setlength{\belowdisplayskip}{2pt}
\[
v_i(\pi) = \sum_{j \in \children(i)} v_{j}(\pi)
\]

Hence, we assume that internal nodes have additive utilities over their children. 
Note that, by linearity of summation, $v_i(\pi)$ can be rewritten as $v_i(\pi) = \sum_{x \in \leaves(i)} v_x(\pi)$, where the sum ranges over the leaves of the subtree $\tree_i$. In contrast, leaves have no children and are here equipped with standard additive utilities: any leaf $x \in \leaves$ is associated with a utility function $u_x: 2^\items \rightarrow \mathbb{R}_{\geq 0}$ such that $v_x(\pi) = u_x(\pi(x))$ for any multilevel allocation $\pi \in \mallocations$. Formally, $u_x(\emptyset)=0$ and $u_x(S)=\sum_{g \in S} u_x(g)$ for any bundle $S \subseteq \items$, where $u_x(g)$ is a slight abuse of notation for $u_x(\{g\})$.%, the utility of leaf $x$ for item $g$. 

\vspace{0.1cm}

\noindent \textbf{Estimated utility functions. } Our goal is to compute a multilevel allocation that is both complete and fair. We focus on envy-based fairness notions, which typically require that each agent values its own bundle at least as much as any other agent’s bundle. Such notions rely on agents being able to evaluate bundles directly. However, in our setting, utility functions $v_i, i\in \internal$, are defined over multilevel allocations rather than bundles of items. As a result, standard envy-based notions cannot be applied directly. 

We therefore begin by introducing a method to evaluate bundles. 
As highlighted by \cite{Scarlett_et_al2023}, there are two ways to do so: (i) we can assume that $i$ computes an allocation of its bundle $S$ within $\tree_i$, in which case its estimated utility is the sum of its children’s (equivalently, leaves’) utilities, or (ii) we can assume that $i$ is agnostic to the allocation, and its estimated utility is the weighted average of $S$ over the leaves in $\leaves(i)$.
In this paper, we investigate both propositions. For (i), we propose two natural approaches: the first one, which we call the \textit{optimistic} approach, assumes that $i$ computes a utilitarian-optimal allocation of $S$ to its leaves $\leaves(i)$, while the \textit{pessimistic} approach assumes that $i$ compute an allocation of $S$ to $\leaves(i)$ minimizing the utilitarian-welfare. 

\vspace{-0.15cm}

\begin{definition}[Optimistic estimated utility function]
    Given a node $i \in \nodes$ and a subset of items $S \subseteq \items$, the \emph{optimistic estimated utility function} of $i$ for $S$ is 

\vspace{-0.25cm}

    $$\overset{\wedge}{v}_i(S) = \max_{A \in \allocations^S_{\leaves(i)}} \sum_{x \in \leaves(i)} \sum_{g \in A(x)} u_x(g)$$
\end{definition}

\vspace{-0.25cm}

\begin{definition}[Pessimistic estimated utility function]
    Given a node $i \in \nodes$ and a subset of items $S \subseteq \items$, the \emph{pessimistic estimated utility function} of $i$ for $S$ is 

    \vspace{-0.25cm}
    
    $$\overset{\vee}{v}_i(S) = \min_{A \in \allocations^S_{\leaves(i)}} \sum_{x \in \leaves(i)} \sum_{g \in A(x)} u_x(g)$$
\end{definition}

\vspace{-0.15cm}

For (ii), we refer to the weighted average evaluation of bundle $S$ as the \textit{agnostic} approach. On the contrary of the optimistic and pessimistic approaches, the agnostic approach does not assume a specific allocation to the leaves, and measures the value of a bundle as a weighted average utility over all leaves in $\leaves(i)$.

\vspace{-0.15cm}

\begin{definition}[Agnostic estimated utility function]
    Given a node $i \in \nodes$ and a subset of items $S \subseteq \items$, the \emph{agnostic estimated utility function} of $i$ for $S$ is 
    
    \vspace{-0.25cm}
    
    $$\overline{v}_i(S) = \sum_{x \in \leaves(i)} \sum_{g \in S} u_x(g) \times W(x, i) \mbox{ with } W(x, i) = \Pi_{k \in \ancestors(x) \setminus \ancestors(i)} \frac{w_k}{w_{\children(\parent(k))}}$$ where $w_{\children(\parent(k))} = \sum_{k' \in \children(\parent(k))} w_{k'}$.
\end{definition}

\vspace{-0.1cm}

For any internal node $i \in \internal$ and leaf $x \in \leaves(i)$, $W(x,i)$ can be interpreted as the weight of $x$ in the subtree $\tree_i$. The quantity $W(i,j)$ is defined similarly for any node $i \in \nodes$ and ancestor $j \in \ancestors(i)$. Moreover, Appendix~B shows that $\sum_{x \in \leaves(i)} W(x,i)=1$.

Note that, for any leaf $x \in \leaves$, we have $\overset{\wedge}{v}_x(S) = \overline{v}_x(S) = \overset{\vee}{v}_x(S) = u_x(S)$ for any bundle $S \subseteq \items$, since leaf $x$ is a tree with only one node (and thus $\leaves(x) = \{x\}$). 

\vspace{0.2cm}

\noindent \textbf{Fairness.} Having defined the estimation functions, we can propose some multilevel extensions of some envy-based notions. This definition is parametrized to accommodate our three estimated utility functions.

\vspace{-0.2cm}

\begin{definition}[M{[$\mathcal{E}$]}-WEF] \label{def: m-wef1}
    Multilevel allocation $\pi \! \in \! \mallocations$ is \emph{estimated Multilevel Weighted Envy-Free for}  $\mathcal{E} \in \{\text{pess}, \text{agno}, \text{opt}\}$ (denoted by M[$\mathcal{E}$]-WEF), if for any internal node $i \in \internal$, and any pair of children $j, k \in \children(i)$, we have 
    $$\frac{v_j(\pi)}{w_j} \geq \frac{\tilde{v}_j(\pi(k))}{w_k}$$

    \vspace{-0.1cm}
    
   \noindent where $\tilde{v}$ denotes $\overset{\wedge}{v}$, $\overline{v}$ or $\overset{\vee}{v}$ for $\mathcal{E}=\text{pess}$, $\text{agno}$, or $\text{opt}$, respectively.
    %some estimated utility function $\tilde{v}$.
\end{definition}

The definition of M[$\mathcal{E}$]-WEF1 is  obtained by simply requiring $\frac{v_j(\pi)}{w_j} \geq \frac{\tilde{v}_j(\pi(k) \setminus \{g\})}{w_k}$ for some item $g \in \pi(k)$. For  M[$\mathcal{E}$]-WEFX, this must hold  for all items $g \in \pi(k)$.

\medskip

Note that, if $\tree$ has height $h(1)\!\!=\!\!1$, the problem reduces to a standard \emph{monolevel} allocation setting, and our three estimated fairness notions coincide with the classical WEF, WEF1, and WEFX notions.

\begin{example} \label{example: example1}
    We now illustrate the different estimated utility functions. Assume an instance with $\nodes = \{1, \ldots, 7\}$ organized as in Fig~\ref{fig:initial_tree}, and $\items = \{g_1, \ldots, g_5\}$. Leaves 4 and 6 have the following preferences $u_x(g) = 2$ for $x \in \{4, 6\}$ and any $g \in \items$ ; leaves 5 and 7 have the following preferences $u_x(g) = 1$ for $x \in \{5, 7\}$ and any $g \in \items$. Assume the weight of any node $i \in \nodes$ is $w_i = |\leaves(i)|$. Suppose we have the multilevel allocation $\pi \in \mallocations$ such that $\pi(4) = \{g_1, g_5\}$, $\pi(5) = \{g_3\}$, $\pi(6) = \{g_2\}$, and $\pi(7) = \{g_4\}$. By definition, $\pi(2) = \{g_1, g_3, g_5\}$ and $\pi(3) = \{g_2, g_4\}$.

    To assess whether node $3$ is envious towards node $2$, we need to choose an estimated utility function to estimate the value of $2$'s bundle. Depending on which estimated utility functions we choose, the results might differ. Indeed, in this example, we can see that: $\overset{\wedge}{v}_3(\pi(2) \setminus \{g_1\}) = 4$ since all items are allocated to leaf $6$ in a utilitarian-optimal allocation ; $\overline{v}_3(\pi(2) \setminus \{g_1\}) = 3$ ; and finally, $\overset{\vee}{v}_3(\pi(2) \setminus \{g_1\}) = 2$ since all items are allocated to leaf $7$ to minimize the utilitarian-welfare. Hence, the conclusions on whether $3$ envies $2$ would differ based on which notion you use: according to the optimistic function, $3$ is weighted envious towards $2$ even up to one good, while according to both agnostic and pessimistic, $3$ is weighted envy-free towards $2$ up to one good. 
\end{example}

\section{Identical additive valuations} \label{section: identical valuations}

In \cite{Scarlett_et_al2023}, the authors propose algorithms achieving strong fairness guarantees under two assumptions: (i) all agents share identical preferences, and (ii) agents within the same group share identical preferences. We show that their algorithms and arguments extend naturally to our multilevel setting. In our framework, these assumptions translate to: (i) all leaves $x \in \leaves$ have identical preferences, and (ii) all leaves within each subtree rooted at a child of the root (i.e., $x \in \leaves(i)$ for $i \in \children(1)$) have identical preferences; 
for instance, in our food charity example, branches within the same department may face similar needs and thus share preferences over supplies. 
We refer to (i) as \emph{all-common} valuations, and to (ii) as \emph{root-child-common} valuations.

\begin{remark}
    Under both all-common and root-child-common valuations, for any internal node $i \in \internal \setminus \{1\}$ and any bundle $S \subseteq \items$, we have:
    \[ \forall A, B \in \allocations^S_{\leaves(i)}, 
        \sum_{x \in \leaves(i)} u_x(A(x)) = \sum_{x \in \leaves(i)} u_x(B(x)).
    \]
In other words, the allocation of $S$ to the leaves in $\leaves(i)$ does not affect the utility derived by  $i$. Consequently, the three estimated utility functions coincide, i.e.,  $\overset{\wedge}{v}_i(S) \! =\! \overline{v}_i(S)\! = \! \overset{\vee}{v}_i(S)$. Hence the definitions of M[$\mathcal{E}$]-WEF, for $\mathcal{E} \in$ \{pess, agno, opt\}, also coincide.

\end{remark}

Accordingly, in this section, we slightly abuse notation and refer to the value of a bundle even for internal nodes, rather than the value of a multilevel allocation. We also write M-WEF instead of M[$\mathcal{E}$]-WEF for brevity.

\subsection{All-common valuations}

\begin{theorem}
    Under all-common valuation, a M-WEFX allocation can be computed in polynomial time.
\end{theorem}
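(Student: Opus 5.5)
Under all-common valuations every leaf has the same additive function $u$, so by the Remark every node $i$ values a bundle $S$ as $\tilde v_i(S)=|\leaves(i)|\cdot u(S)$ when $i$ is internal, and as $u(S)$ when $i$ is a leaf. In general, write $v_i(S)=c_i\, u(S)$ with $c_i=|\leaves(i)|>0$. The M-WEFX condition at an internal node $i$ for children $j,k$ then reads
\[
\frac{c_j\, u(\pi(j))}{w_j} \;\geq\; \frac{c_j\, u(\pi(k)\setminus\{g\})}{w_k}\qquad \forall g\in\pi(k).
\]
Since $c_j>0$, this is equivalent to the classical monolevel WEFX condition $u(\pi(j))/w_j \geq u(\pi(k)\setminus\{g\})/w_k$ for the common additive valuation $u$ and weights $(w_j)_{j\in\children(i)}$. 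The multiplicative constant $c_j$ cancels, so each local problem is a monolevel weighted allocation with identical additive valuations.

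The plan is therefore to proceed top-down. Set $\pi(1)=\items$. Then, processing internal nodes in topological order, split $\pi(i)$ among $\children(i)$ with a polynomial-time algorithm that returns a complete WEFX allocation for identical additive valuations and weights $w_{\children(i)}$. Completeness at every node gives completeness of $\pi$. Each fairness constraint involves only siblings and the bundle their parent received, so the local guarantees together yield M-WEFX. The total running time is a sum over internal nodes of polynomial local costs, hence polynomial.

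The remaining step, and the main obstacle, is the local algorithm for weighted WEFX with identical additive valuations. I would follow the weighted envy-cycle / greedy approach used by \cite{Scarlett_et_al2023}. Sort the items by decreasing $u(g)$, and assign each item in turn to a child $j$ minimizing $(u(A(j))+u(g))/w_j$; I expect the minimizer of $u(A(j))/w_j$ may fail and the "after-adding" criterion to be the right one.

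To prove WEFX, take the last item $g$ added to $A(k)$. At that moment $k$ minimized $(u(A(k))+u(g))/w_k$, which gives $(u(A_{\text{then}}(j))+u(g))/w_j \ge u(A(k))/w_k$ for every other child $j$. I then need to compare this with $u(A(k)\setminus\{g'\})$ for the smallest item $g'\in A(k)$, which is $g$ itself because items arrive in decreasing order. This gives a bound of $u(A(j))/w_j$ against $(u(A(k))-u(g))/w_k$, but the comparison of $(u(A_{\text{then}}(j))+u(g))/w_j$ with $u(A(j))/w_j$ only closes when $w_j\ge w_k$ or $j$ already holds an item of value at least $u(g)$. If that inequality fails for small weights, I would fall back on the known existence and polynomial computability of weighted EFX for identical additive goods, for instance via a weighted leximin-type or largest-ratio-first procedure, and invoke it as a black box at each node.
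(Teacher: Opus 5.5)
Your reduction is sound: M-WEFX at an internal node is exactly monolevel WEFX among its children with common valuation $u$ and weights $w$, and nodes can be treated top-down independently. One correction: in a complete local allocation each item of $S$ goes to exactly one leaf, and the agnostic weights $W(x,i)$ sum to one. So all three estimates equal $u(S)$, and $v_j(\pi)=u(\pi(j))$, not $|\leaves(i)|\,u(S)$. Since your factor sits on both sides, this slip is harmless.

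The genuine gap is the local step, which is the whole content of the theorem. Your rule, giving the next item $g$ to a child minimizing $(u(A(j))+u(g))/w_j$, does not merely lack a proof; it is wrong. Take two children with $w_1=1$, $w_2=10$ and two items of value $1$. The scores are $1$ versus $0.1$, then $1$ versus $0.2$, so child $2$ gets both items. Then $u(A(1))/w_1=0<(2-1)/10$, so the result is not even WEF1. Your fallback to a ``known'' polynomial-time weighted EFX procedure is an unproven black box. The one concrete candidate you name, leximin, is not polynomial-time computable: with two agents, identical valuations and equal weights it already solves Partition.

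The missing idea is that your expectation is reversed: the before-adding rule works. Sort the items in decreasing order of $u$ and give the next item $g$ to a child $k$ minimizing $u(A(k))/w_k$. Assume the allocation was WEFX before this step. For any sibling $j$ and any $g'\in A(k)\cup\{g\}$ we have $u(g')\ge u(g)$. Hence $u\bigl((A(k)\cup\{g\})\setminus\{g'\}\bigr)\le u(A(k))$, and $u(A(k))/w_k\le u(A(j))/w_j$ by the choice of $k$, so $j$ does not WEFX-envy $k$. Every other pair is either unchanged or only sees the envier's bundle grow, so WEFX is preserved.

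This is precisely the paper's proof. The paper runs the same rule in one pass per item, descending from the root and choosing the child of least weighted bundle value at each level until reaching a leaf. That is equivalent to your node-by-node version, because the items reaching any node still arrive in decreasing order.
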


\begin{proof}
    The following algorithm is a multilevel extension of the algorithm presented in the proof of Theorem~3.1 in \cite{Scarlett_et_al2023}. Order the goods in decreasing order of preferences. Starting at the root, select the child with the least weighted bundle value (i.e. the utility of the bundle divided by the weight of the node). If this child is not a leaf, repeat the selection process until selecting a leaf, denoted $x \in \leaves$. Allocate the first remaining item (i.e. the highest-value remaining item) to the selected leaf. At any point in time, the leaf we selected (or any of its ancestors) cannot be weighted envied before the allocation of the new item by one of its siblings (as it has the least valued bundle). Any envy that forms towards any of the nodes $i \in \ancestors(x)$ can only result of the latest good allocated, and any envy will disappear upon dropping this good. Moreover, this good is also the least valued one in $i$'s bundle, by construction. The resulting multilevel allocation is thus M-WEFX. Furthermore, the algorithm is polynomial: sorting the goods is polynomial-time doable, the leaf-selection process is in $O(n)$, and allocating the chosen item at an iteration also takes $O(n)$. 
\end{proof}

We then show that we can obtain M-WEF1 jointly with some (monolevel) fairness notion at the leaves. The full proof and the algorithm can be found in the appendix. One component of the algorithm presented in the appendix is the Multilevel Weighted Round Robin, presented in Algorithm~\ref{algo: mwrr} in Section~\ref{subsection: root_child_common}. 

\begin{theorem}
    Under all-common valuations and $w_i = |\leaves(i)|$ for all $i \in \nodes$, there exists a polynomial-time algorithm that returns an allocation that is M-WEF1 and EFX between all leaves in $\leaves$.
\end{theorem}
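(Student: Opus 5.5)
Under all-common valuations every leaf has the same additive utility $u$. By the Remark, every internal node $i$ values a bundle $S$ as $|\leaves(i)|\,u(S)$, since all three estimates coincide. With $w_i=|\leaves(i)|$, the weighted value of node $i$ is $v_i(\pi)/w_i = u(\pi(i))$, the plain value of its bundle. Likewise $\tilde v_j(\pi(k)\setminus\{g\})/w_k = |\leaves(j)|\,u(\pi(k)\setminus\{g\})/|\leaves(k)|$. Thus M-WEF1 at node $i$ says that, for siblings $j,k$, the per-leaf average $u(\pi(j))/|\leaves(j)|$ is at least the per-leaf average of $\pi(k)$ after removing one good. The plan has two phases. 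First decide how many goods, and which ones, each leaf gets so that leaves are EFX among themselves. Then arrange the positions of these leaf-bundles so that each internal node's aggregate is M-WEF1.

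\textbf{Phase 1.} Sort goods in decreasing value $g_1\succeq\dots\succeq g_m$ and group them into rounds of $L=|\leaves|$ consecutive goods. Round $r$ is $R_r=\{g_{(r-1)L+1},\dots,g_{rL}\}$; the last round may be partial. Every leaf gets exactly one good per full round. This is a (monolevel) round-robin structure, so leaves are EF1. To upgrade to EFX, reverse the order within alternate rounds (a snake/boustrophedon picking), or use the leximin-style argument of \cite{Scarlett_et_al2023}. For identical additive valuations, a known fact gives EFX for the allocation that lets the currently poorest leaf take the next-largest good. I would reuse this greedy: it is exactly the M-WEFX procedure of the previous theorem run on the flat tree, which is EFX among all leaves. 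This fixes the multiset of leaf bundles $B_1,\dots,B_L$.

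\textbf{Phase 2.} Assign the $L$ bundles to the $L$ leaves so that every internal node is M-WEF1. Here I use the Multilevel Weighted Round Robin (Algorithm~\ref{algo: mwrr}), treating each bundle $B_\ell$ as a single ``meta-good'' with value $u(B_\ell)$. Sort the bundles in decreasing value. At the root, run WRR with weights $|\leaves(j)|$ over the children. A child $j$ receives exactly $|\leaves(j)|$ meta-goods, since the total count equals the total weight and WRR with integer weights proportional to counts hands each child its quota. Then recurse inside each child. The WEF1 guarantee of WRR \cite{Chakraborty_et_al_2020} for meta-goods yields $u(\pi(j))/w_j \ge u(\pi(k)\setminus B)/w_k$ for some meta-good $B\subseteq\pi(k)$. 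This is weaker than removing a single good, so it must be strengthened.

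\textbf{Main obstacle.} Converting ``up to one bundle'' into ``up to one good'' is the hard step. I would not use the Phase 1 greedy as a black box. Instead I would interleave the two phases. Process goods in decreasing order, round by round. In each full round of $L$ goods, distribute the round's goods through the tree using MWRR with weights $|\leaves(\cdot)|$. Each child $j$ of any node then receives exactly $|\leaves(j)|$ goods of the round, and each leaf receives one. The WRR picking order within round $r$ is chosen by a snake rule relative to the leaves' current totals: leaves, and recursively subtrees, with smaller accumulated value pick earlier.

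Two facts then remain to check. First, at every node, per-leaf averages of siblings differ by at most the first good of the envied sibling, via the standard WRR telescoping argument in which each of $j$'s $t$-th picks dominates the corresponding picks of $k$ scaled by weights. Second, leaves that all receive one good per round, with poorer leaves picking first, form an EFX allocation. I expect the compatibility of these two orderings, the tree-respecting WRR order versus the ``poorest first'' order, to be the crux. I would resolve it by showing that under identical valuations the MWRR order can always be chosen poorest-first within each subtree without breaking the WRR counting invariant. Polynomial time is immediate.
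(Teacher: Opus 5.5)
\textbf{Gap: the interleaved scheme cannot give EFX, and the missing step is a value shift.}
Your Phase~1 (the poorest leaf takes the next-largest good, i.e.\ Sequential Maximin) followed by Phase~2 (MWRR on the $L$ leaf bundles treated as meta-goods, each child $j$ receiving exactly $|\leaves(j)|$ of them) is in fact exactly the paper's algorithm. You also correctly observe that meta-good WEF1 only yields envy-freeness up to a whole bundle.

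The gap is that you then abandon this route for an interleaved round-by-round scheme, and its second ``fact'' is false. If every leaf receives exactly one good per full round, all leaves end with (almost) equal cardinalities, and under identical valuations this can be incompatible with EFX. Take two leaves and goods valued $10,1,1,1$. Each leaf gets two goods, so the bundles are $\{10,1\}$ and $\{1,1\}$. The poorer leaf still envies after removing the $1$, since $10>2$. No within-round ordering fixes this, so the ``compatibility of the two orderings'' you flag as the crux is unachievable, not merely unproven. The same example rules out the snake/boustrophedon idea in Phase~1. The EFX allocation here, $\{10\}$ versus $\{1,1,1\}$, requires the poorest leaf to take several goods in a row, which only the greedy does.

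What you were missing is how to turn ``up to one bundle'' into ``up to one good'' inside your original two-phase plan: shift the meta-good values.
\begin{itemize}
\item Let $B_{\min}$ be a least-valued leaf bundle and set $\hat u(B_\ell)=u(B_\ell)-u(B_{\min})\ge 0$. This preserves the ranking of meta-goods, so MWRR runs identically.
\item Write $\hat\pi$ for the resulting allocation of meta-goods. The M-WEF1 guarantee of MWRR under identical valuations, applied to $\hat u$, gives for siblings $j,k$ a meta-good $B$ held by $k$ with $\hat u(\hat\pi(j))/w_j\ge(\hat u(\hat\pi(k))-\hat u(B))/w_k$.
\item Since $j$ and $k$ hold exactly $w_j=|\leaves(j)|$ and $w_k=|\leaves(k)|$ meta-goods, the shift subtracts exactly $u(B_{\min})$ from each side. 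This leaves $u(\pi(j))/w_j\ge(u(\pi(k))-\hat u(B))/w_k$.
\item EFX of the Phase~1 allocation gives $u(B)-u(g)\le u(B_{\min})$, i.e.\ $\hat u(B)\le u(g)$, for every $g\in B$. Hence $j$ does not envy $k$'s bundle after removing any single good of $B$.
\end{itemize}
This is precisely where the hypothesis $w_i=|\leaves(i)|$ is used.

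A minor slip: under all-common valuations every estimate of a bundle $S$ equals $u(S)$, not $|\leaves(i)|\,u(S)$. So $v_i(\pi)/w_i$ is the per-leaf average $u(\pi(i))/|\leaves(i)|$. The spurious factor cancels, and your reformulated M-WEF1 condition is correct.
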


\subsection{Root-child-common valuations} \label{subsection: root_child_common}

We now present a multilevel extension of the Weighted Round Robin (WRR) algorithm from \cite{Chakraborty_et_al_2020,Scarlett_et_al2023}. The principle of our algorithm, which we call Multilevel Weighted Round Robin (MWRR), is the following: each node $i \in \nodes$ is equipped with a picking score $t_i$ which counts the number of times $i$ was picked by the MWRR, and this picking score is weighted by $i$'s weight. Then, starting at the root, MWRR picks the child $i \in \children(1)$ minimizing this picking score (breaking ties lexicographically). If $i$ is an internal node, we repeat the selection process, i.e. we select the child $j \in \children(i)$ which minimizes the picking score, until the selected node is a leaf. Once it is the case, the selected leaf gets to choose her preferred item among the remaining ones. We repeat this procedure until all items are allocated. For the leaf selection procedure, ties are broken differently depending on whether they occur between internal nodes or leaves: (i) among internal nodes, ties are broken lexicographically; (ii) among leaves, they are broken in favor of the leaf with the highest utility for any remaining item. If the children of the considered node is a mix of internal nodes and leaves, we use (i). Pseudocode can be found in Algorithm~\ref{algo: mwrr}.
% differently depending on whether they occur between internal nodes or leaves: (i) among internal nodes, ties are broken lexicographically; (ii) among leaves, they are broken in favor of the leaf with the highest utility for any remaining item. If the children of the considered node is a mix of internal nodes and leaves, we use (i). Pseudocode can be found in Algorithm~\ref{algo: mwrr}.

\begin{algorithm}[!h] \small
\caption{\small Multilevel Weighted Round Robin (MWRR)}
\label{algo: mwrr}
\begin{algorithmic}[1]
    \State \textbf{Input:} $\tree$ - a multilevel tree ; $\items$ - a set of items ; the leaves' valuations $(u_x)_{x \in \leaves}$
    \State \textbf{Output:} $\pi $ - a multilevel allocation
    \State Initialize $\pi$ with empty bundle for any $i \in \nodes \setminus \{1\}$ and $\pi(1) = \items$
    \State Initialize picking scores s.t. $t_i = 0, \forall i \in \nodes$
    \State $\mathcal{RI} \gets \items$ \Comment{Initial set of remaining items}
    \While{$\mathcal{RI} \neq \emptyset$}
        \State $i = \arg \min_{i' \in \children(1)} \frac{t_{i'}}{w_{i'}}$
        \While{$i \notin \leaves$}
            \State $i = \arg \min_{i' \in \children(i)} \frac{t_{i'}}{w_{i'}}$
        \EndWhile
        \State $g = \arg \max_{g' \in \mathcal{RI}} u_i(g')$
        \While{$i \neq 1$}
            \State $\pi(i) \gets \pi(i) \cup \{g\}$
            \State $t_i \gets t_i + 1$
            \State $i \gets \parent(i)$
        \EndWhile
         \State $\mathcal{RI} \gets \mathcal{RI} \setminus \{g\}$
    \EndWhile
    \State \Return $\pi$
\end{algorithmic}
\end{algorithm}

Our goal is to study what fairness properties MWRR may guarantee. We first look at its performance under root-child-common additive valuations, and in Section~\ref{section: general additive valuations}, we study thoroughly how fair it is in the general additive case. 

\begin{theorem} \label{theorem: mwrr mwef1}
    Under root-child-common valuations and arbitrary weights, the MWRR returns an M-WEF1 allocation in polynomial time.
\end{theorem}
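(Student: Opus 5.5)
The plan is to reduce the claim, node by node, to the monolevel guarantee of Weighted Round Robin from \cite{Chakraborty_et_al_2020}: WRR produces a WEF1 allocation whenever the picking order follows the rule ``pick an agent minimizing $t/w$'' and every agent, when it picks, takes its favourite item among those still available. Polynomial running time is immediate and I treat it first. There are $m$ iterations of the outer loop. Each iteration descends at most $h(1)\le n$ levels, scanning the children of the current node. It then computes an $\arg\max$ over at most $m$ items and updates at most $n$ ancestors. This gives $O(m(n+m))$ overall.

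\emph{Step 1 (bundle values at internal nodes).} Let $r\in\children(1)$, and let $u^r$ denote the valuation shared by all leaves of $\tree_r$. By the Remark of Section~\ref{section: identical valuations}, for every node $j\in\nodes_r$ we have $v_j(\pi)=u^r(\pi(j))\cdot|\leaves(j)|$. More usefully, the allocation inside $\tree_j$ is irrelevant, so $v_j$ and the estimated utility $\tilde v_j$ are the same additive set function $U_j(S)=\sum_{x\in\leaves(j)}u^r(S)$. This gives $U_j=c_j\,u^r$ for a positive constant $c_j$, so $U_j$ ranks items exactly as $u^r$ does. For a leaf $j$ we have $U_j=u_j$, which is consistent with this. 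Consequently, for any internal node $i$ and children $j,k\in\children(i)$, M-WEF1 reduces to the monolevel inequality
\[ \frac{U_j(\pi(j))}{w_j}\ \ge\ \frac{U_j(\pi(k)\setminus\{g\})}{w_k}\quad\text{for some } g\in\pi(k). \]

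\emph{Step 2 (local picking sequence).} Fix an internal node $i$, and consider the subsequence of MWRR iterations whose selected leaf lies in $\tree_i$; these are exactly the iterations in which an item enters $\pi(i)$. In each such iteration the selection descends through $i$ and chooses $\arg\min_{j\in\children(i)} t_j/w_j$, where $t_j$ is exactly the number of items $j$ has received so far. So, restricted to $\children(i)$, the sequence of recipients is a weighted picking sequence of WRR type with weights $(w_j)_{j\in\children(i)}$, under some tie-breaking. Next I need that each item $j$ receives is a $U_j$-best remaining item. If $j\in\nodes_r$, the item is chosen by some leaf $x\in\leaves(j)$ as its $u_x=u^r$-best item among $\mathcal{RI}$. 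By Step 1 this is also $U_j$-best. Moreover $\mathcal{RI}$ contains every item that will later enter $\pi(k)$ for any sibling $k$. When $i=1$, this is also where root-child-common valuations are used: child $r$ picks according to its own $u^r$, while the siblings may have different valuations. That is harmless, because WRR's guarantee is for heterogeneous agents.

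\emph{Step 3 (invoke the WRR argument).} Now I replay the proof of \cite{Chakraborty_et_al_2020} for the pair $(j,k)$. Their proof uses only two facts: (a) between consecutive picks of $k$, the picking rule guarantees that $j$ picks ``often enough'' relative to $w_j/w_k$; and (b) each pick of $j$ is worth at least any item that $k$ picks later. Fact (b) holds here because $j$'s pick was its favourite within a pool containing $k$'s future items. Removing $k$'s first item then gives the WEF1 inequality. The main obstacle is checking that their argument is insensitive to two features of our setting. The first is that the available pool at each pick is larger than $\pi(i)$: items may later go outside $\tree_i$. This only strengthens (b). The second is that ties among $t/w$ are broken here lexicographically, or by leaf utility, rather than by their specific rule. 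I expect their counting argument for (a) to hold for any tie-breaking consistent with minimizing $t/w$. If it does not, I would re-derive (a) directly: between $k$'s $\ell$-th and $(\ell+1)$-th picks, $j$ has picked at least $\lceil \ell w_j/w_k\rceil$-type many items, and I would sum the resulting inequalities.
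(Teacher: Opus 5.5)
Your proposal is correct and follows essentially the paper's route. You treat each child of an internal node as a WRR agent with an additive valuation, which is legitimate because all leaves below a given root child are identical, and you invoke the WEF1 guarantee of Weighted Round Robin \cite{Chakraborty_et_al_2020} at every internal node; you are in fact more explicit than the paper about the enlarged pool of remaining items and about tie-breaking, and your fact (a) holds for any tie-breaking because the selected child always minimizes $t/w$. One slip to fix in Step~1: $v_j(\pi)=u^r(\pi(j))$ and $\tilde v_j(S)=u^r(S)$ for all three estimates, with no factor $|\leaves(j)|$ (for the agnostic one this uses $\sum_{x\in\leaves(j)}W(x,j)=1$), but you carry the same constant on both sides of the envy inequality, so it cancels and the argument is unaffected.
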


\begin{proof}
    We first show that the MWRR runs in polynomial time: the while loop (Line 6) runs in $O(m)$ steps. Finding the child of the root with minimum weighted picking score (Line 7) is at most in $O(n)$ (for tree of height 1), and a loose bound for the while loop (Line 8) is $O(n^2)$. Finding the item with maximum utility (Line 11) can be done in $O(m)$, and updating the bundle (Line 12) requires at most $O(n)$ (in a comb tree). Hence, the complexity of the algorithm is $O(m(n^2 + m))$.

    Then, recall that for any child of the root, $i \in \mathcal{C}(1)$, all leaves in $\leaves(i)$ have the same preferences over singletons, i.e. $\forall x, y \in \leaves(i), \forall g \in \items, u_x(g) = u_y(g)$. Hence, any child of the root $i$ can be seen as an agent with additive utility over items: no matter which of its leaves is chosen at any iteration, node $i$ will receive the same item (as all leaves in $\leaves(i)$ have the same preferences), which will yield the same utility. Moreover, MWRR at the root chooses a child according to the "least weight-adjusted frequent picker" criterion, exactly like in \cite{Chakraborty_et_al_2020}. Since their algorithm is known to be WEF1 w.r.t. agents with additive utilities over items, we can conclude that MWRR is WEF1 w.r.t. $\mathcal{C}(1)$. Then, we can repeat the same argument recursively for any node $i \in \nodes$. 
\end{proof}

\begin{corollary}
    Under root-child-common valuations and $w_i = |\leaves(i)|$  for all $i \in \nodes$, the MWRR is M-WEF1 and EF1 between all leaves in $\leaves$.
\end{corollary}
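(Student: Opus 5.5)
The M-WEF1 part of the statement follows directly from Theorem~\ref{theorem: mwrr mwef1}, since the weights $w_i = |\leaves(i)|$ are a particular case of arbitrary weights. The remaining work is to show EF1 between any two leaves $x, y \in \leaves$. Under root-child-common valuations, if $x$ and $y$ lie in different subtrees rooted at children of the root, they may have different preferences. If they lie in the same root-child subtree, they share a common additive utility $u$. The plan is to reduce leaf-level EF1 to a statement about picking frequencies along the tree, using that with $w_i = |\leaves(i)|$ the MWRR behaves like a plain round robin over all leaves.

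\textbf{Step 1: structural invariant.} I will prove by induction on the iterations that, with $w_i = |\leaves(i)|$, the weighted picking scores $t_i/w_i$ are always \emph{balanced}. Every leaf receives a pick at rate $1/|\leaves|$. Concretely, I aim to show that after $r$ iterations each leaf $x$ has $t_x \in \{\lfloor r/|\leaves| \rfloor, \lceil r/|\leaves| \rceil\}$, so that no leaf picks a second time before every leaf has picked once in the current ``round''. The key point is that $t_i = \sum_{x \in \leaves(i)} t_x$ for internal $i$. Choosing the child with minimal $t_i/|\leaves(i)|$ always descends into a subtree that still contains a leaf with the minimal count, because a subtree whose leaves all have the higher count has average $\lceil\cdot\rceil$. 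I must check that this is strictly larger than the average of a subtree containing a low-count leaf, whenever such a subtree exists among the siblings. This step requires a careful argument.

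\textbf{Step 2: EF1 from the round structure.} Given the invariant, the picking sequence decomposes into rounds in which each leaf picks exactly once. The order inside a round may vary across rounds, which is the main obstacle compared with classical round robin. For leaves $x,y$, I will pair each pick of $y$ with the next pick of $x$ after it. Each item $x$ takes in round $r+1$ is worth to $x$ at least as much as anything $y$ takes later. Dropping the first item $y$ received, each remaining item of $y$ taken in round $r$ can be matched to an item $x$ picked no later than it, namely $x$'s pick in round $r$ or earlier. I need the ordering argument to be robust to order changes between rounds. Pairing $y$'s $r$-th pick ($r\ge 2$) with $x$'s $(r-1)$-th pick works, because $x$'s $(r-1)$-th pick occurs before $y$'s $r$-th pick: rounds are sequential, so the round-$(r-1)$ pick precedes anything in round $r$. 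At that moment $x$ chose its favorite remaining item, so $u_x$ of $x$'s $(r-1)$-th item is at least $u_x$ of $y$'s $r$-th item. Summing gives $u_x(\pi(x)) \ge u_x(\pi(y)\setminus\{g\})$, where $g$ is $y$'s first item. A final partial round is handled the same way.

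\textbf{Main obstacle.} Step 2 is routine once Step 1 holds, and Step 1 is where I expect difficulty. Ties among internal nodes are broken lexicographically, and I must ensure that tie-breaking never sends the selection into a subtree whose leaves have all already picked in the current round. I will attack this by proving the stronger invariant that for every internal node $i$, the counts of the leaves in $\leaves(i)$ differ by at most one, and a subtree has minimal $t_i/|\leaves(i)|$ among its siblings only if it contains a minimal-count leaf. The ``only if'' follows since the average over a subtree equals the lower count exactly when all its leaves are low, and otherwise exceeds it. If all leaves in the whole tree are at the same count, every choice is fine.
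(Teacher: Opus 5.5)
Your proposal is correct and follows the paper's route: M-WEF1 is inherited from Theorem~\ref{theorem: mwrr mwef1}, and EF1 among leaves comes from splitting the run into rounds in which every leaf picks exactly once, as in the argument adapted from \cite{Scarlett_et_al2023}. Your version is more careful than the paper's, which asserts the round structure without proof and argues via a fixed ``picking order'': your invariant justifies the rounds, and pairing $y$'s $r$-th item with $x$'s $(r-1)$-th item handles within-round orders that can change between rounds, since ties among leaves are broken according to the remaining items (one small correction: the operative comparison is that an all-high subtree has average exactly $c+1$ while any subtree containing a count-$c$ leaf averages strictly less than $c+1$, not the comparison with the lower count stated in your last sentence).
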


\begin{proof}
    From Theorem~\ref{theorem: mwrr mwef1}, we know that MWRR satisfies M-WEF1. Furthermore, the proof of Theorem~3.9 in \cite{Scarlett_et_al2023} can be extended to our setting and establishes EF1 among the leaves. The full proof can be found in Appendix~A. 
\end{proof}

In this section, we showed that the algorithms proposed in \cite{Scarlett_et_al2023} could be extended easily to satisfy our multilevel fairness properties, namely M-WEF1 and M-WEFX, sometimes even jointly with fairness at the leaves. In Section~\ref{section: general additive valuations} we discuss how to extend our multilevel fairness notions under general additive valuations. 

\section{General additive valuations} \label{section: general additive valuations}

We now focus on the more general case where we have general additive valuations.

\subsection{Relations among M[$\mathcal{E}$]-WEF1 notions}

We formally establish the hierarchy between the adaptations we presented. 

\begin{lemma} \label{lemma: opt > agno}
    For any internal node $i \in \internal$ and bundle $S \subseteq \items$, we have $\overset{\wedge}{v}_i(S) \geq \overline{v}_i(S)$.
\end{lemma}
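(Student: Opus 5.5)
We prove the inequality item by item, by showing that the agnostic estimate is a convex combination of leaf valuations while the optimistic estimate picks the largest leaf valuation for every item. The first step is to compute $\overset{\wedge}{v}_i(S)$ in closed form. Since leaf utilities are additive and a complete local allocation $A \in \allocations^S_{\leaves(i)}$ assigns each item of $S$ to exactly one leaf, the objective $\sum_{x} \sum_{g \in A(x)} u_x(g)$ decomposes as $\sum_{g \in S} u_{x_A(g)}(g)$, where $x_A(g)$ is the leaf receiving $g$. Assigning each item independently to a leaf in $\arg\max_{x \in \leaves(i)} u_x(g)$ is a valid complete local allocation, and no allocation can beat it term by term. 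Hence
\[
\overset{\wedge}{v}_i(S) = \sum_{g \in S} \max_{x \in \leaves(i)} u_x(g).
\]

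The second step is to rewrite the agnostic estimate by exchanging the order of summation:
\[
\overline{v}_i(S) = \sum_{g \in S} \sum_{x \in \leaves(i)} W(x,i)\, u_x(g).
\]
The weights satisfy $W(x,i) > 0$, because all node weights $w_k$ are strictly positive. They also satisfy $\sum_{x \in \leaves(i)} W(x,i) = 1$, which is the identity shown in Appendix~B. If we want this identity self-contained, we prove it by induction on $h(i)$. For a leaf $x$ the product is empty, so $W(x,x) = 1$. For an internal node $i$ we have $W(x,i) = \frac{w_j}{w_{\children(i)}} W(x,j)$ for the unique child $j \in \children(i)$ with $x \in \leaves(j)$. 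Summing over the leaves grouped by child gives $\sum_{j \in \children(i)} \frac{w_j}{w_{\children(i)}} \cdot 1 = 1$.

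The final step is immediate. For each $g$, the inner sum is a convex combination of the values $u_x(g)$, so it is at most $\max_{x \in \leaves(i)} u_x(g)$. Summing over $g \in S$ yields $\overline{v}_i(S) \le \overset{\wedge}{v}_i(S)$.

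The only real point requiring care is the normalization of the weights $W(x,i)$, in particular the factorization $W(x,i) = \frac{w_j}{w_{\children(i)}} W(x,j)$ that drives the induction. This follows from $\ancestors(x) \setminus \ancestors(i) = \{j\} \cup (\ancestors(x) \setminus \ancestors(j))$ and $\parent(j) = i$. Everything else is a direct consequence of additivity.
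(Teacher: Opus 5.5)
Your proof is correct and follows the paper's route. Like the paper, you rewrite $\overset{\wedge}{v}_i(S)$ as $\sum_{g \in S}\max_{x \in \leaves(i)} u_x(g)$, swap the sums in $\overline{v}_i(S)$, and bound each per-item convex combination (weights $W(x,i)$ summing to $1$) by its maximum; your justification of the optimistic closed form and your induction on $h(i)$ for $\sum_{x \in \leaves(i)} W(x,i)=1$ just make rigorous two points the paper asserts or defers to Appendix~B.
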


\begin{proof}
By linearity, $\overline{v}_i(S)$  and $\overset{\wedge}{v}_i(S)$ can be rewritten as follows: 
\begin{align*}\small
\overline{v}_i(S) &= \sum_{x \in \leaves(i)} \sum_{g \in S} u_x(g)\, W(x,i) = \sum_{g \in S} \sum_{x \in \leaves(i)} u_x(g)\, W(x,i)\\
\overset{\wedge}{v}_i(S) &= \max_{A \in \allocations^S_{\leaves(i)}} \sum_{x \in \leaves(i)} \sum_{g \in A(x)} u_x(g)
= \sum_{g \in S} \max_{x \in \leaves(i)} u_x(g)
\end{align*}
Note that, for any $g \in S$, both  $\sum_{x \in \leaves(i)} u_x(g)\, W(x,i)$ and $\max_{x \in \leaves(i)} u_x(g)$ are convex combinations of $(u_x(g))_{x \in \leaves(i)}$; the latter places all the weight on some leaf $x^* \in \arg\max_{x \in \leaves} u_x(g)$, while the former uses weights $(W(x,i))_{x \in \leaves(i)}$, which satisfy \linebreak $\sum_{x \in \leaves(i)} W(x,i) = 1$ (as shown in Appendix~B). Since a convex combination is maximized by putting all the weight on the largest element, we obtain:
\[
\max_{x \in \leaves(i)} u_x(g)
\geq
\sum_{x \in \leaves(i)} u_x(g)\, W(x,i).
\]
Then, summing over $g \in S$ gives:
\[
\overset{\wedge}{v}_i(S)
= \sum_{g \in S} \max_{x \in \leaves(i)} u_x(g)
\geq
\sum_{g \in S} \sum_{x \in \leaves(i)} u_x(g)\, W(x,i)
= \overline{v}_i(S).
\] 
\end{proof}

\begin{lemma} \label{lemma: agno > pess}
    For any internal node $i \in \internal$ and bundle $S \subseteq \items$, we have $\overline{v}_i(S) \geq \overset{\vee}{v}_i(S)$.
\end{lemma}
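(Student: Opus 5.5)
The plan mirrors the proof of Lemma~\ref{lemma: opt > agno}, with the maximum replaced by a minimum. First, rewrite the pessimistic estimate item by item. A complete local allocation $A \in \allocations^S_{\leaves(i)}$ is the same thing as a choice, for each item $g \in S$, of one leaf $x \in \leaves(i)$ receiving it. The objective $\sum_{x \in \leaves(i)} \sum_{g \in A(x)} u_x(g)$ is additive and separable across items, with no coupling constraint between them. So the minimization decomposes:
\[
\overset{\vee}{v}_i(S) = \sum_{g \in S} \min_{x \in \leaves(i)} u_x(g).
\]
By linearity, as in Lemma~\ref{lemma: opt > agno}, the agnostic estimate can also be written item by item:
\[
\overline{v}_i(S) = \sum_{g \in S} \sum_{x \in \leaves(i)} u_x(g)\, W(x,i).
\]

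Next, fix an item $g \in S$. The weights $W(x,i)$ are products of positive ratios $w_k / w_{\children(\parent(k))}$, so they are nonnegative. By Appendix~B they sum to one over $\leaves(i)$. Hence $\sum_{x} u_x(g)\, W(x,i)$ is a convex combination of the values $(u_x(g))_{x \in \leaves(i)}$. Any convex combination is at least the smallest of the values it combines:
\[
\sum_{x \in \leaves(i)} u_x(g)\, W(x,i) \geq \min_{x \in \leaves(i)} u_x(g) \sum_{x \in \leaves(i)} W(x,i) = \min_{x \in \leaves(i)} u_x(g).
\]
Summing this inequality over $g \in S$ gives $\overline{v}_i(S) \geq \overset{\vee}{v}_i(S)$.

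The only step needing care is the decomposition of the minimum. One must check that every assignment of items to leaves is feasible in $\allocations^S_{\leaves(i)}$. This holds because completeness only requires each item to go to exactly one leaf, and disjointness is then automatic. Given this, assigning each $g$ to some leaf in $\arg\min_x u_x(g)$ attains the bound, and no allocation can do better, since each item contributes at least its minimal value. The empty bundle is trivial, since both sides equal $0$.
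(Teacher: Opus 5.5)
Your proof is correct and follows the paper's own argument: write both estimates item by item, note that $\sum_{x \in \leaves(i)} u_x(g)\,W(x,i)$ is a convex combination of $(u_x(g))_{x \in \leaves(i)}$ (the weights sum to one by Appendix~B), and bound it below by $\min_{x \in \leaves(i)} u_x(g)$. You also justify two points the paper leaves implicit: that the minimum over complete local allocations decomposes item by item, and that the weights are nonnegative.
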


% \vspace{-0.6cm}

\begin{proof}
    The proof is similar to that of Lemma~\ref{lemma: opt > agno}: since $\overset{\vee}{v}_i(S)$ is obtained by placing all weight on the leaf with minimum utility for $g$, we have $\sum_{x \in \leaves(i)} u_x(g) W(x,i) \ge \min_{x \in \leaves(i)} u_x(g)$, and thus $\overline{v}_i(S) \ge \overset{\vee}{v}_i(S)$. 
\end{proof}

These relationships between the estimated utility functions induce an implication structure among the fairness notions. The proof trivially results from Lemmas~\ref{lemma: opt > agno} and \ref{lemma: agno > pess}.

\begin{proposition}
$\text{M[opt]-WEF1} \;\Rightarrow\; \text{M[agno]-WEF1} \;\Rightarrow\; \text{M[pess]-WEF1}$.
\end{proposition}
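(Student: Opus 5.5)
Fix a multilevel allocation $\pi \in \mallocations$ that is M[opt]-WEF1. We show it is M[agno]-WEF1, and then show in the same way that M[agno]-WEF1 implies M[pess]-WEF1. Note that the left-hand side $\frac{v_j(\pi)}{w_j}$ of the condition in Definition~\ref{def: m-wef1} is the true utility of $j$ under $\pi$ and does not depend on $\mathcal{E}$; only the right-hand side changes. So the approach is to fix an internal node $i \in \internal$ and two children $j, k \in \children(i)$ and compare the right-hand sides.

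By the M[opt]-WEF1 assumption, there is some $g \in \pi(k)$ with
\[
\frac{v_j(\pi)}{w_j} \geq \frac{\overset{\wedge}{v}_j(\pi(k) \setminus \{g\})}{w_k}.
\]
We keep this same witness item $g$ and apply Lemma~\ref{lemma: opt > agno} with node $j$ and bundle $S = \pi(k)\setminus\{g\}$. This gives $\overset{\wedge}{v}_j(S) \ge \overline{v}_j(S)$. Since $w_k > 0$, dividing by $w_k$ preserves the inequality, so
\[
\frac{v_j(\pi)}{w_j} \ge \frac{\overline{v}_j(\pi(k)\setminus\{g\})}{w_k}.
\]
This is exactly the M[agno]-WEF1 condition for the pair $(j,k)$. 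Since $i$, $j$ and $k$ were arbitrary, $\pi$ is M[agno]-WEF1. The second implication follows the same way, using Lemma~\ref{lemma: agno > pess} in place of Lemma~\ref{lemma: opt > agno}.

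The only real obstacle is the case where $j$ is a leaf, because the lemmas are stated for internal nodes. For a leaf, the paper already notes that $\overset{\wedge}{v}_j = \overline{v}_j = \overset{\vee}{v}_j = u_j$, so the required inequalities hold with equality.

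A second point to check is the matching of each symbol to its notion. The definition's phrase ``respectively'' pairs the symbols with pess/agno/opt in an order that looks reversed. We should read the intended matching as the names suggest: opt uses $\overset{\wedge}{v}$, the largest estimate and hence the strongest requirement, and pess uses $\overset{\vee}{v}$. That reading is the one under which the chain of implications goes in the stated direction.
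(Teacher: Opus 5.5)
Your proof is correct and follows the paper's argument: the paper simply says the chain follows directly from Lemmas~\ref{lemma: opt > agno} and~\ref{lemma: agno > pess}, and you apply them in the same way, keeping the same witness item $g$ and comparing right-hand sides. Your two side remarks are also sound. The leaf case is covered by the paper's observation that all three estimates equal $u_x$ at a leaf. The ``respectively'' in Definition~\ref{def: m-wef1} is indeed a typo: the intended pairing (opt with $\overset{\wedge}{v}$, pess with $\overset{\vee}{v}$) is the one the examples use, and it is the one needed for the implications to hold in the stated direction.
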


In other words, for any allocation $\pi \in \mallocations$, if $\pi$ is not M[pess]-WEF1, then it is not M[agno]-WEF1; similarly, if it is not M[agno]-WEF1, then it is not M[opt]-WEF1. However, some allocations that are not M[opt]-WEF1 may still be M[agno]-WEF1 (e.g., the allocation in Example~\ref{example: example1}), and some that are not M[agno]-WEF1 may still be M[pess]-WEF1, as shown in the following example.

\begin{example}
    We can slightly modify the instance of Example~\ref{example: example1} to exhibit a multilevel allocation that is M[pess]-WEF1 but not M[agno]-WEF1. We now have $\nodes = \{1, \ldots, 9\}$ organized in the tree of Fig.~\ref{fig:example3-tree}. Leaves $x = 4, 5$ and $7$ have utilities $u_x(g) = 2$ for all $g \in \items$ ; leaves $x=6,8$ and $9$ have utilities $u_x(g) = 1$ for all $g \in \items$. Assume the weight of any node $i \in \internal$ is $w_i = |\leaves(i)|$, and that of any leaf $x \in \leaves$ is 1. Suppose we have the multilevel allocation $\pi \in \mallocations$ such that $\pi(4) = \{g_1\}$, $\pi(5) = \{g_3\}$, $\pi(6) = \{g_5\}$, $\pi(7) = \{g_2\}$, $\pi(8) = \{g_4\}$, and $\pi(9) = \emptyset$.
    For such an allocation $\pi$, we  have $v_3(\pi(3))= 3$, and for any $g\in \pi(2)$, $\overline{v}_3(\pi(2) \setminus \{g\}) = 10/3$, while $\overset{\vee}{v}_3(\pi(2) \setminus \{g\}) = 2$. Hence, since $w_2=w_3$, $\pi$ is M[pess]-WEF1, but not M[agno]-WEF1.
\end{example}

\vspace{-1.4cm}

\begin{figure}[h!]
\begin{center}
\begin{minipage}{0.48\linewidth}
\resizebox{\linewidth}{!}{
\begin{tikzpicture}[
    level distance=7mm,
    every node/.style={circle, draw, inner sep=2pt, font=\footnotesize},
    level 1/.style={sibling distance=30mm},
    level 2/.style={sibling distance=10mm},
    level 3/.style={sibling distance=6mm},
    edge from parent/.style={draw, -stealth}
]

\node (1) {1}
    child { node (2) {2}
        child { node (4) {4} }
        child { node (5) {5} }
        child { node (6) {6} }
    }
    child { node (3) {3}
        child { node (7) {7} }
        child { node (8) {8} }
        child { node (9) {9} }
    };

\end{tikzpicture}
}
\caption{Tree of Example 2.}
\label{fig:example3-tree}
\end{minipage}
\end{center}
\end{figure}

\vspace{-1.7cm}

\subsection{On the existence of M[$\mathcal{E}$]-WEF1 allocation}

In this section, we show that MWRR guarantees M[pess]-WEF1 under general additive valuations. In contrast, M[agno]-WEF1 is harder to ensure: some instances admit no such allocation (and thus no M[opt]-WEF1 allocation either), and MWRR may fail to guarantee it even when one exists. Nevertheless, Section~\ref{subsection: numerical results} shows that such failures remain rare in practice.

We begin by showing that MWRR guarantees M[pess]-WEF1. Our approach relies on a key lemma, which allows us to adapt the proof of WEF1 for the Weighted Round Robin algorithm from \cite{Chakraborty_et_al_2020} to our multilevel setting.

\begin{theorem}
    MWRR always returns an M[pess]-WEF1 allocation.
\end{theorem}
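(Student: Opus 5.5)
The plan is to reduce the statement, at each internal node, to the monolevel WEF1 guarantee of Weighted Round Robin \cite{Chakraborty_et_al_2020}. Two observations make this possible: the pessimistic estimate is additive over items, and each pick made inside a subtree is ``locally greedy'' with respect to it.

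\textbf{Step 1 (additivity of $\overset{\vee}{v}_j$).} As in the proof of Lemma~\ref{lemma: opt > agno}, but with a minimum in place of the maximum, every good of $S$ can be assigned independently to a leaf of least utility for it. Hence
\[
\overset{\vee}{v}_j(S)=\sum_{g\in S}\min_{y\in\leaves(j)}u_y(g),
\]
so $\overset{\vee}{v}_j$ is an additive set function with item values $p_j(g):=\min_{y\in\leaves(j)}u_y(g)$. For leaves, $p_j=u_j$.

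\textbf{Step 2 (key lemma: greedy domination).} Fix an internal node $i$ and a child $j\in\children(i)$. Let $a_1,a_2,\dots$ be the goods entering $\pi(j)$, in order. The $t$-th such good is picked by some leaf $x_t\in\leaves(j)$, and $x_t$ takes a good of maximum utility among those remaining. Set $r_t:=u_{x_t}(a_t)$. Then $v_j(\pi)=\sum_t r_t$, and for every good $g$ still unallocated at that moment,
\[
r_t\ \ge\ u_{x_t}(g)\ \ge\ p_j(g).
\]
So, from $j$'s perspective, the run of MWRR behaves like a monolevel picking sequence in which $j$ receives value $r_t$ at its $t$-th turn, and that value dominates the $p_j$-value of every good taken afterwards, in particular every good later taken by $k$.

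\textbf{Step 3 (the picking sequence among $\children(i)$ is WRR).} Each time a leaf of $\tree_i$ picks, the descent of Lines 7--9 passes through $i$ and selects the child $i'\in\children(i)$ minimizing $t_{i'}/w_{i'}$. The counter $t_{i'}$ is incremented exactly when a good enters $\pi(i')$. Restricted to the turns at which $i$ receives a good, the order in which the children of $i$ pick is therefore exactly the weighted ``least weight-adjusted frequent picker'' sequence of \cite{Chakraborty_et_al_2020}, with weights $(w_{i'})_{i'\in\children(i)}$. Goods taken outside $\tree_i$ between these turns only shrink the pool. This does not affect Step 2, which compares $r_t$ only with goods that are still available.

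\textbf{Step 4 (redo the WEF1 counting argument).} I would now replay the proof of WEF1 for WRR from \cite{Chakraborty_et_al_2020}, with $j$ as the envier and $k$ as the envied child. That proof never uses the fact that $j$'s utility is one fixed additive function. It only uses that the value $j$ collects at its $t$-th pick is at least the value, for $j$, of any good picked later. By Steps 1--2, this holds with the additive function $p_j$ on $k$'s goods and with the values $r_t$ on $j$'s side.

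Concretely, remove from $\pi(k)$ the first good $g^*$ that $k$ obtained. The rest of $\pi(k)$ can then be partitioned into blocks, each lying after some pick $t$ of $j$. The weighted scheduling rule bounds how many of $k$'s goods fall between consecutive picks of $j$, relative to $w_k/w_j$. Summing $r_t\ge p_j(g)$ over these blocks with the standard weighted charging gives
\[
\frac{v_j(\pi)}{w_j}\ \ge\ \frac{\overset{\vee}{v}_j(\pi(k)\setminus\{g^*\})}{w_k}.
\]

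I expect the main obstacle to be this last step: checking carefully that the weighted charging inequality of \cite{Chakraborty_et_al_2020} goes through when $j$'s per-pick values $r_t$ come from different leaves. I would handle it by writing their argument abstractly, in terms of a nonincreasing-dominance sequence, namely $r_t\ge p_j(g)$ for all $g$ picked by $k$ after $j$'s $t$-th pick. I would then verify that tie-breaking among internal nodes, and goods consumed elsewhere in the tree, do not break the bound on interleavings.
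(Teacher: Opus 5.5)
Your proposal is correct and follows essentially the same route as the paper. The paper combines the picking-ratio lemma $t_j/t_i\ge w_j/w_i$ with its key lemma in Appendix~B, which is exactly your Step~2 domination $u_{x}(g)\ge\min_{y\in\leaves(j)}u_y(g')$ for every remaining $g'$. It then replays the $\alpha$/$\beta$ induction of \cite{Chakraborty_et_al_2020}, taking the $\beta$'s to be your item values $p_j$ (relying implicitly on your Step~1 additivity), so the obstacle you flag in Step~4 dissolves as you anticipate: that induction only uses $\alpha^x_y\ge\max\{\beta_x,\ldots,\beta_{t_i}\}$, never that the $\alpha$'s come from a single leaf's valuation.
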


\vspace{-0.2cm}
\begin{proofsketch}
The proof follows the same structure as the WRR proof of \cite{Chakraborty_et_al_2020}, with an additional lemma (Lemma~5, Appendix~B) to handle the multilevel pessimistic setting. We first prove the result for the children of the root. The same argument then applies recursively to any internal node. During the first $|\children(1)|$ steps, each child of the root receives at most one item. Hence, the allocation is trivially M[pess]-WEF1 at this stage. Then, as in WRR, the picking rule ensures that for any $i,j \in \children(1)$, we have $\frac{t_j}{t_i} \ge \frac{w_j}{w_i}$.
We show a lemma stating that whenever node $j$ is selected, the utility it derives from the chosen item (through its selected leaf) is at least the worst utility that any leaf in $\leaves(i)$ can obtain from any item allocated in subsequent steps. In particular, it dominates the worst item eventually received by any sibling $i$. Combining the picking ratio property with the mentioned lemma, we obtain that the total utility accumulated by $j$ (excluding its first item) is at least the total worst-case utility of $i$. This implies that $j$ is M[pess]-WEF1 with respect to $i$.
Applying the same argument recursively to each internal node concludes the proof.

\end{proofsketch}

\begin{theorem} \label{theorem: inexistence agnostic}
    Under general valuations, an M[agno]-WEF1 allocation needs not exist.
\end{theorem}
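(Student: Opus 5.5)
To prove the statement, I will construct a small explicit instance and show by exhaustive case analysis that every complete multilevel allocation violates the M[agno]-WEF1 condition at some internal node. I will use the shape of Fig.~\ref{fig:initial_tree}: a root $1$ with two internal children $2$ and $3$, each having two leaves. I will take few items (three or four) and possibly non-uniform weights. Weights are arbitrary in the model, and this freedom is the main lever.

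\textbf{Mechanism to exploit.} The agnostic estimate $\overline{v}_3(\pi(2))$ charges every item at the $W$-weighted average of the leaves' values. The real utility $v_3(\pi)$, by contrast, depends on which leaf of node~3 actually holds each item. The M[agno]-WEF1 constraints between leaves $6$ and $7$ are ordinary WEF1 constraints, and I will use them to force some items in $\pi(3)$ onto the leaf that values them \emph{less}. This pushes $v_3(\pi)$ strictly below the agnostic average of its own bundle.

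For this to work, the leaf that values the items less must be the one that envies. So I will give the two leaves cross-valued items: one leaf values every item moderately, the other values a single item very highly and the rest at $0$. The weights $w_6, w_7$ then control how many items WEF1 forces each leaf to take. I will make node $2$ the symmetric copy of node $3$. Whatever split the root makes, the node with the smaller weighted share then also suffers this ``internal loss'' relative to the agnostic estimate of its sibling's bundle. Removing one good should not repair the gap, because the loss compounds over several items.

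\textbf{Order of steps.}
\begin{enumerate}
\item Fix the tree and weights, and write the $4\times m$ utility table.
\item Use symmetry to enumerate the possible root splits $(\pi(2),\pi(3))$ up to relabeling.
\item For each split, list the internal allocations at nodes $2$ and $3$ that are WEF1 among the leaves; only these can be M[agno]-WEF1 at the lower level.
\item For each surviving candidate, compute $v_j(\pi)/w_j$ and $\max_{g}$-reduced $\overline{v}_j(\pi(k)\setminus\{g\})/w_k$, and exhibit the violated root-level inequality.
\end{enumerate}
The leaf-level WEF1 step prunes most cases, which should keep the case analysis short.

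\textbf{Main obstacle.} The hard part is finding numbers that make the argument go through. My first naive attempts fail, for two reasons. With identical items and proportional weights, the forced split reproduces exactly the $W(x,i)$ weights, so the real value equals the agnostic average. And a leaf with zero values never envies, so it exerts no force. I will therefore first try items that are identical for one leaf but heterogeneous for the other, together with skewed leaf weights. If a hand-tuned instance resists, I will run a small computer search over integer utilities in $\{0,1,2,3\}$ and simple weights for this tree, then verify the found instance by hand in the enumeration above.
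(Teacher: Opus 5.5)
There is a genuine gap: what you describe is a search plan, not a proof. For a non-existence statement the whole content is a concrete instance plus an exhaustive check, and you give neither. The numbers are left to hand-tuning or to a computer search whose success you do not establish.

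Moreover, the one concrete design you describe cannot work. Suppose one leaf of node $3$ values every item at $a$, and the other values a single item $g^*$ at $H$ and all others at $0$. Then for every bundle, the welfare-maximizing split (give $g^*$ to the spiky leaf when $H\ge a$, everything else to the flat leaf) is already WEF1 between the two leaves. Any envy can only come from $g^*$, and it disappears once $g^*$ is removed. So node $3$ always attains $\overset{\wedge}{v}_3(\pi(3))\ge\overline{v}_3(\pi(3))$ and suffers no internal loss.

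With node $2$ a copy, an M[agno]-WEF1 allocation then always exists. Run WRR at the root on the additive valuations $\overline{v}_2,\overline{v}_3$, then split each bundle welfare-optimally. By Lemma~\ref{lemma: opt > agno}, $v_j(\pi)=\overset{\wedge}{v}_j(\pi(j))\ge\overline{v}_j(\pi(j))$, so WEF1 with respect to $\overline{v}$ carries over. The same argument shows that your mechanism (a sibling-level WEF1 constraint pushing a node's real value below the agnostic value of its own bundle) is actually necessary on your tree. But, as you observed yourself, a leaf with zero value for the contested items exerts no force.

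The missing idea is indivisibility, and it already works with the identical valuations you discarded; your reason for discarding them only covers equal leaf weights. Keep your tree but skew the leaf weights so that the heavier leaf values items more: $w_6=2w_7$, $u_6(g)=2$ and $u_7(g)=1$ for every $g$, node $2$ a copy, $w_2=w_3$, and five items. Each item has agnostic value $5/3$ for node $3$.

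Two items cannot be split $2{:}1$. Giving both to $6$ makes $7$ envy even up to one good, and giving both to $7$ makes $6$ envy. So WEF1 forces a $1$--$1$ split worth $3<10/3$.

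With five items, one child of the root gets at most two items. The split $(3,2)$ fails by this loss alone: with identical items, removing one good from the three-item bundle exactly equalizes counts, so any strict loss suffices and nothing needs to compound. The splits $(4,1)$ and $(5,0)$ fail trivially, since $2<5$ and $0<20/3$.

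The odd count matters. For this design the split $(k,k)$ always survives, and with three items $(1,2)$ survives, so your budget of three or four items is too small.

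This is precisely the paper's counterexample in disguise. Its height-three tree with $w_i=|\leaves(i)|$ makes nodes $4$--$7$ act as single agents of weights $4$ and $2$ valuing each item at $2$ and $1$. That realizes the skewed $2{:}1$ weights while keeping leaf-count weights.
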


\begin{proof}
Consider the multilevel instance represented in Fig.~\ref{fig:example4-tree}, and assume weights are $w_i = |\leaves(i)|$ for $i \in \nodes$. Let $\items=\{g_1,\ldots,g_5\}$. Leaves have additive utilities: children of $4$ and $6$ value every item at $2$, while children of $5$ and $7$ value every item at $1$. We show that no multilevel allocation can be M[agno]-WEF1.

\vspace{-0.7cm}
\begin{figure}[H]
\begin{center}
\begin{minipage}{0.5\linewidth}
\centering
\resizebox{\linewidth}{!}{
\begin{tikzpicture}[
    level distance=13mm,
    every node/.style={circle, draw, inner sep=2pt, font=\Large},
    level 1/.style={sibling distance=55mm},
    level 2/.style={sibling distance=28mm},
    level 3/.style={sibling distance=9mm},
    edge from parent/.style={draw, -stealth}
]
\node {1}
    child { node {2}
        child { node {4}
            child { node {8} }
            child { node {9} }
            child { node {10} }
            child { node {11} }
        }
        child { node {5}
            child { node {12} }
            child { node {13} }
        }
    }
    child { node {3}
        child { node {6}
            child { node {14} }
            child { node {15} }
            child { node {16} }
            child { node {17} }
        }
        child { node {7}
            child { node {18} }
            child { node {19} }
        }
    };
\end{tikzpicture}
}
\caption{Tree of Example 3}
\label{fig:example4-tree}
\end{minipage}
\end{center}
\end{figure}
\vspace{-1.2cm}

First, notice that no multilevel allocation allocating all items to (leaves of) node $2$ can be M[agno]-WEF1. Indeed, in such case, $v_3(\pi)/w_3 = 0 < \overline{v}_3(\items \setminus \{g\})/w_2 = 20/18$ for any item $g \in \items$. Conversely, no multilevel allocation allocating all items to leaves of node $3$ can be M[agno]-WEF1, by symmetry of the instance.

Similarly, no multilevel allocation allocating exactly one item $g \in \items$ to node $3$ can be M[agno]-WEF1. Indeed, even if this item was allocated to a leaf in $\children(6)$ (i.e. the group of leaves with the highest utilities), we would have $v_3(\pi)/w_3\! =\! 2/6\! < \!\overline{v}_3(\pi(2) \backslash \{g'\})\! =\! 10/12$, where $\pi(2)\! =\! \items \setminus \{g\}$, and $g' \! \in \! \pi(2)$. The same applies when allocating exactly one item to node $2$, and the rest to node $3$, again by symmetry.

Finally, we show that even allocating $3$ items to node $2$, and $2$ items to node $3$ (or conversely) cannot yield a M[agno]-WEF1 allocation. Suppose a multilevel allocation where $|\pi(2)| = 3$ and $|\pi(3)| = 2$. Then, we show that no matter how $\pi(3)$ is allocated to its leaves, it must be agnostic weighted envious towards node $2$. Indeed, assume the two items are allocated to leaves in $\children(6)$, then node $7$ is agnostic weighted envious towards node $6$, since it received no item, and node $6$ received $2$ items. Then, node $3$ is obliged to allocate one item to node $6$, and one item to node $7$. But then, we have $v_3(\pi)/w_3 = 3/6 < \overline{v}_3(\pi(2) \setminus \{g\})/w_2 = 10/18$. By symmetry, no multilevel allocation allocating $2$ items to node $2$, and $3$ items to node $3$ can be M[agno]-WEF1. 

\end{proof}

This example shows that under general valuations, an M[agno]-WEF1 allocation may fail to exist. However, this is not the reason why MWRR fails: the algorithm may return an allocation that is not M[agno]-WEF1 even when such an allocation exists.

\begin{theorem} \label{theorem: mwrr not agno}
    Under general valuations, MWRR does not guarantee an M[agno]-WEF1 allocation, even when it exists.
\end{theorem}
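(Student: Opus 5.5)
The plan is to give an explicit small counterexample: an instance where MWRR's output violates M[agno]-WEF1 at some internal node, together with a different allocation of the same instance that is M[agno]-WEF1. The mechanism I want to exploit is that the agnostic estimate $\overline{v}_j$ averages over \emph{all} leaves of $\tree_j$. A leaf that is picked early by MWRR may greedily take an item it likes only moderately. If the items its sibling leaf values highly then flow to the other subtree, the realized utility $v_j(\pi)$ stays low while $\overline{v}_j(\pi(k)\setminus\{g\})$ is large.

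Concretely, I would take the root $1$ with two children: a leaf $2$ and an internal node $3$ whose children are leaves $4$ and $5$. All weights equal $1$. The items are $\items=\{x,y,z,w\}$, with a large parameter $M$ and the following utilities:
\begin{itemize}
\item leaf $2$: $u_2(x)=u_2(y)=1$, $u_2(z)=u_2(w)=0$;
\item leaf $4$: $u_4(z)=1$, $u_4(y)=0.9$, $u_4(x)=0.9$, $u_4(w)=0$;
\item leaf $5$: $u_5(x)=u_5(y)=M$, $u_5(z)=u_5(w)=0$.
\end{itemize}
I will trace Algorithm~\ref{algo: mwrr}, breaking ties at the root lexicographically and among leaves in favour of the highest utility for a remaining item. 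The root alternates $2,3,2,3$.
\begin{itemize}
\item Leaf $2$ takes $x$.
\item Node $3$ selects leaf $4$ (tie broken by highest utility: $1$ versus $M$ needs checking). If leaf $5$ would win this tie, I will instead make leaf $4$'s top value exceed $M$ on $z$ only, or simply order the leaves so that the lexicographic rule applies. This tie-breaking detail is the step I expect to need the most care, and I would adjust the utilities until leaf $4$ is forced to pick first and takes $z$.
\item Leaf $2$ takes $y$.
\item Leaf $5$ receives $w$, which it values at $0$.
\end{itemize}
Then $v_3(\pi)=1$, while for either $g\in\pi(2)=\{x,y\}$ we have $\overline{v}_3(\pi(2)\setminus\{g\})\ge \tfrac12 M$. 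This exceeds $1$ for $M>2$, so $\pi$ is not M[agno]-WEF1 at node $1$.

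Next I will exhibit a fair allocation $\pi'$, checking every internal node:
\begin{itemize}
\item $\pi'(2)=\{x,z\}$, $\pi'(5)=\{y\}$, $\pi'(4)=\{w\}$, so $\pi'(3)=\{y,w\}$.
\item At node $1$: $v_2(\pi')=1\ge u_2(\{y,w\}\setminus\{y\})=0$. Also $v_3(\pi')=M\ge \overline{v}_3(\{x,z\}\setminus\{x\})=\tfrac12$.
\item At node $3$ (a monolevel comparison between leaves): leaf $4$ has $0\ge u_4(\{y\}\setminus\{y\})=0$, and leaf $5$ has $M\ge 0$.
\end{itemize}
Hence $\pi'$ is M[agno]-WEF1, which proves that such an allocation exists while MWRR fails to return one.

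The main obstacle is only the precise tie-breaking inside node $3$ at its first pick, since the MWRR rule favours the leaf with the highest utility for any remaining item. If that rule makes leaf $5$ pick first, the fallback is to reverse the roles so that the first leaf picked by node $3$ is the one with a greedy, moderately-valued choice. This can be done by giving leaf $4$ a very large value on $z$ (larger than $M$) while keeping its values on $x,y$ small. Alternatively, leaf $4$ can be given an extra early pick by changing weights, for instance $w_4>w_5$. The rest of the verification is routine arithmetic with the agnostic weights $W(4,3)=W(5,3)=\tfrac12$.
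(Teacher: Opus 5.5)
Your construction is the entire proof, and as written it is not a counterexample. At node $3$'s first pick both leaves have score $0$. Since both children of $3$ are leaves, MWRR breaks the tie by rule (ii): the leaf with the highest utility for some remaining item wins. After leaf $2$ takes $x$, leaf $5$ values the remaining $y$ at $M>1$, while leaf $4$'s best remaining value is $1$. So leaf $5$ is selected and takes $y$; leaf $2$ then takes $z$ or $w$, and leaf $4$ gets the other. The output is your own $\pi'$, or the variant with $z,w$ swapped, which is equally M[agno]-WEF1. So MWRR succeeds on this instance.

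None of your repairs helps:
\begin{itemize}
\item Making $u_4(z)>M$ lets leaf $4$ win the tie. But then $v_3(\pi)\ge u_4(z)>M>\tfrac12(u_4(g')+M)=\overline{v}_3(\pi(2)\setminus\{g\})$ whenever $u_4$ is small on $x,y$, so the envy disappears.
\item Changing $w_4,w_5$ cannot affect a tie between two scores that are both $0$.
\item The lexicographic rule is used only when an internal node is among the tied children, so reordering the leaves is irrelevant.
\end{itemize}

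The failure is structural, not a matter of tuning. With $w_2=w_3$ the root order is $2,3,2,3$. Suppose leaf $2$ receives $a_1$ and then $a_2$. Then $a_2$ is still available at node $3$'s first pick, and the tie-break guarantees that this pick is worth at least $\max_{\ell\in\{4,5\}}u_\ell(a_2)$. A convex combination is at most its maximum, so this is at least $\overline{v}_3(\{a_2\})=\overline{v}_3(\pi(2)\setminus\{a_1\})$. The other pairs are immediate: leaf $2$ picks greedily, and leaves $4,5$ each hold one item. Hence with four items on this tree MWRR's output is always M[agno]-WEF1, whatever the utilities.

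To break it you need node $2$ to collect at least two items after node $3$'s first pick. You then exploit node $3$'s \emph{second} pick, which the picking scores force onto the other leaf. Here is a working instance on your tree:
\begin{itemize}
\item Items $a_1,a_2,a_3,b_1,b_2$, with utilities listed in that order: $u_2=(3,2,1,0,0)$, $u_4=(9,9,9,10,0)$, $u_5=(9,9,0,0,1)$.
\item MWRR returns $\pi(2)=\{a_1,a_2,a_3\}$, $\pi(4)=\{b_1\}$, $\pi(5)=\{b_2\}$.
\item So $v_3(\pi)=11$, while $\overline{v}_3(\pi(2)\setminus\{g\})\ge 13.5$ for every $g\in\pi(2)$.
\item The allocation $\pi'(2)=\{a_1\}$, $\pi'(4)=\{b_1,a_3\}$, $\pi'(5)=\{a_2,b_2\}$ is M[agno]-WEF1.
\end{itemize}

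The paper instead uses the height-three tree of Fig.~\ref{fig:example4-tree} with leaf-count weights. There, node $3$'s second pick is forced onto node $7$, whose leaves value the items less. The mechanism is the one you had in mind, but any such instance must be traced against MWRR's actual leaf tie-breaking rule.
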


\begin{example}

We consider the same instance as in the proof of Theorem~\ref{theorem: inexistence agnostic}, modifying only the leaves’ preferences over singletons. We assume that all leaves in $\children(5)$ and $\children(7)$ have utility $u_x(g)=1/5$ for any $g \in \items$. For $x \in \children(4)$, we set $u_x(g)=5/21$ for $g \in \{g_1,g_2,g_4,g_5\}$ and $u_x(g)=1/21$ for $g_3$. Finally, for $x \in \children(6)$, we set $u_x(g)=5/21$ for $g \in \{g_1,g_2,g_3,g_5\}$ and $u_x(g)=1/21$ for $g_4$.

For this instance, MWRR outputs multilevel allocation $\pi$, where we have : $\pi(8) = \{g_1\}, \pi(9) = \{g_5\}, \pi(12) = \{g_3\}, \pi(14) = \{g_2\}, \pi(18) = \{g_4\}$, and all other leaves received nothing. One can check that $v_3(\pi) = 46/105 \simeq 0.438$ and $\bar{v}_3(\pi(2) \setminus \{g_1\}) = 142/315 \simeq 0.451$. Since nodes 2 and 3 have the same weights, we can deduce that $\pi$ is not M[agno]-WEF1. However, there does exist an M[agno]-WEF1 allocation: $\pi(8) = \{g_2\}$, $\pi(12) = \{g_3\}$, $\pi(13) = \{g_4\}, \pi(14) = \{g_1\}, \pi(18) = \{g_5\}$.  
    
\end{example}

\begin{corollary}
    Under general valuations, MWRR is not guaranteed to compute an M[opt]-WEF1 allocation, even when it exists.
\end{corollary}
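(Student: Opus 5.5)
The corollary follows from Theorem~\ref{theorem: mwrr not agno} together with the implication $\text{M[opt]-WEF1} \Rightarrow \text{M[agno]-WEF1}$ from the Proposition. We need two things: an instance on which MWRR's output is not M[opt]-WEF1, and a proof that an M[opt]-WEF1 allocation nevertheless exists on that instance. We will reuse the instance of the Example following Theorem~\ref{theorem: mwrr not agno}. The first half is immediate. By the Proposition (equivalently, Lemma~\ref{lemma: opt > agno}), any M[opt]-WEF1 allocation is M[agno]-WEF1. Since the MWRR output $\pi$ in that Example is not M[agno]-WEF1, it is not M[opt]-WEF1 either.

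The second half is where the real work lies. The Proposition only runs in one direction, so the M[agno]-WEF1 allocation exhibited in the Example need not be M[opt]-WEF1. Our first step is to check that specific allocation $\pi'$ directly. Write $\pi'(8)=\{g_2\}$, $\pi'(12)=\{g_3\}$, $\pi'(13)=\{g_4\}$, $\pi'(14)=\{g_1\}$, $\pi'(18)=\{g_5\}$. We must verify the optimistic WEF1 inequality at every internal node.

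\emph{Root, nodes $2$ and $3$.} We have $v_2(\pi')=5/21+2/5$ and $v_3(\pi')=5/21+1/5$, and $w_2=w_3=6$. For the optimistic estimates we use the formula $\overset{\wedge}{v}_i(S)=\sum_{g\in S}\max_{x\in\leaves(i)}u_x(g)$ from the proof of Lemma~\ref{lemma: opt > agno}. For node $3$ looking at $\pi'(2)=\{g_2,g_3,g_4\}$, dropping $g_2$ leaves $\{g_3,g_4\}$, whose value is $\max(5/21,1/5)+\max(1/21,1/5)=5/21+1/5=v_3(\pi')$, so this direction holds. For node $2$ looking at $\pi'(3)=\{g_1,g_5\}$, dropping one item leaves value $5/21$, which is at most $v_2(\pi')$.

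\emph{Nodes $4$ and $5$.} Each has weight $4$ or $2$, and these are compared within node $2$. Node $4$ holds $\{g_2\}$ and node $5$ holds $\{g_3,g_4\}$. Dropping one item from node $5$'s bundle leaves a single good, and node $4$'s optimistic value for it is at most $5/21$. The required inequality is $(5/21)/4\ge(\text{value})/2$. For $g_4$ the value is $5/21$, so this fails unless we drop $g_4$. Dropping $g_4$ leaves $g_3$, worth $1/21$ to node $4$'s leaves, and $(1/21)/2\le(5/21)/4$ holds. The comparisons at leaf level are single items and hold trivially up to one good.

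If some inequality fails, the fallback is to perturb $\pi'$ within the small search space. There are only $5$ items and few structurally distinct placements up to leaf symmetry, so we can enumerate candidates with $|\pi(2)|,|\pi(3)|\in\{2,3\}$ and pick one satisfying the optimistic constraints. The main obstacle is this bookkeeping of the optimistic estimates at the root level, since $\max$ over leaves inflates rivals' bundles. The item choice ($g_2$ or $g_1$ dropped, with the low-valued $g_3$ and $g_4$ placed at the $1/5$-leaves) is what should make it work.
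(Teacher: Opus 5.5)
Your proof is correct. For the negative half it follows the paper's route: the output of MWRR on the instance of Theorem~\ref{theorem: mwrr not agno} is not M[agno]-WEF1, hence not M[opt]-WEF1. The paper offers nothing beyond that. It states the corollary as an immediate consequence of that theorem through $\text{M[opt]-WEF1} \Rightarrow \text{M[agno]-WEF1}$.

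As you correctly note, this implication runs the wrong way for the clause ``even when it exists''. The M[agno]-WEF1 allocation exhibited in the Example therefore does not by itself witness existence. Your direct check that the same $\pi'$ is M[opt]-WEF1 supplies the part the paper leaves implicit, and your numbers hold up:
\begin{itemize}
\item At the root, $\overset{\wedge}{v}_3(\{g_3,g_4\}) = 5/21+1/5 = v_3(\pi')$, which holds with equality since $w_2=w_3$.
\item Also at the root, $\overset{\wedge}{v}_2(\{g_5\}) = 5/21 \le v_2(\pi')$.
\item At node $2$, dropping $g_4$ gives $(1/21)/2 \le (5/21)/4$.
\end{itemize}
You do not list node $5$ towards node $4$, or nodes $6$ and $7$ under node $3$. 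These involve only singleton bundles and hold trivially. So the ``fallback'' enumeration in your last paragraph is never needed and can be dropped.
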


Though these results are negative, one may still ask whether an $\alpha$-approximation of M[agno]-WEF1 can be guaranteed. 

\begin{definition}[$\alpha$-M{[agno]}-WEF1]
    A multilevel allocation $\pi$ is an $\alpha$-approximation of M[agno]-WEF1, for some $\alpha > 0$, if, for any internal node $i \in \internal$ and any children $j,k \in \children(i)$, there exists $g \in \pi(k)$ such that:$$v_j(\pi)/w_j \ge \alpha \cdot \overline{v}_j(\pi(k)\setminus\{g\})/w_k$$ 
\end{definition}

\noindent Unfortunately, we show that MWRR cannot guarantee any such $\alpha$-approximation.

\begin{theorem} \label{theorem: inapproximability}
    For any constant factor $\alpha > 0$, there exists an instance where the allocation returned by MWRR is not an $\alpha$-approximation of M[agno]-WEF1.
\end{theorem}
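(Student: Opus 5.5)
The plan is to build, for each $\alpha > 0$, a small parametrized instance in which MWRR gives node $3$ essentially nothing of agnostic value, while the agnostic estimate of node $2$'s bundle (minus any single item) stays bounded away from zero.

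\textbf{Instance.} The root $1$ has two children, $2$ and $3$, with equal weights. Node $2$ has leaf children only. Node $3$ is internal with two leaf children: $a$ with weight $1$, and $b$ with weight $M \gg 1$.

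\textbf{Why this should work.} In the agnostic estimate $\overline{v}_3$, leaf $b$ carries weight $W(b,3) = M/(M+1) \approx 1$. Leaf $a$ carries only $1/(M+1) \approx 0$. But the realized utility $v_3(\pi)$ is whatever $a$ and $b$ actually receive.

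\textbf{Preferences.} Take $\epsilon > 0$ small.
\begin{itemize}
\item Leaf $b$ values every item at $1$.
\item Leaf $a$ values every item at $\epsilon$.
\item Node $2$'s leaves value items so that MWRR lets them grab items in an order that is harmless for this construction.
\end{itemize}

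\textbf{Running MWRR.} At the root, the picks alternate between $2$ and $3$. At node $3$, the first selection is by weighted picking score with both scores $0$. Since $a$ and $b$ are leaves, the tie is broken in favor of the leaf with the highest utility for some remaining item, which is $b$. After that, $a$ has score $0$ while $b$ has score $1/M$, so $a$ is selected for the next roughly $M$ picks of node $3$.

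With $m = 2T$ items and $T \le M$, node $3$ receives $T$ items. One goes to $b$ and $T-1$ go to $a$, so
$$v_3(\pi) = 1 + (T-1)\epsilon.$$
Node $2$ also receives $T$ items, each valued $1$ by $b$. Therefore
$$\overline{v}_3(\pi(2)\setminus\{g\}) \ge (T-1)\frac{M}{M+1}.$$

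\textbf{Choosing parameters.} Set $T = M$ and $\epsilon = 1/M^2$. Then
$$\frac{v_3(\pi)}{\overline{v}_3(\pi(2)\setminus\{g\})} \approx \frac{1 + 1/M}{M-1},$$
which tends to $0$ as $M \to \infty$. Given $\alpha$, choosing $M$ with $2/(M-1) < \alpha$ gives the violation at node $1$ for the pair $(3,2)$.

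\textbf{Remaining check and main obstacle.} I still need to confirm that node $2$'s internal allocation does not matter, since only $3$'s envy toward $2$ is being tested; giving node $2$ a single leaf suffices. I also need that the weights of $2$ and $3$ being equal keeps the root alternating, so $|\pi(2)| = |\pi(3)| = T$.

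The main obstacle is the tie-breaking and the bookkeeping of MWRR's picking scores inside node $3$. Leaf $b$ must get exactly one item early, and $a$ must get everything afterwards. To make this robust against the tie-breaking rule, I would instead give $a$ a slightly higher singleton value on the first item. Alternatively, I would insert an intermediate internal node so that lexicographic tie-breaking applies.

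Failing that, I would fall back to scaling the counterexample in the preceding Example so that the gap grows unboundedly.
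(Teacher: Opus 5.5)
There is a genuine gap: the MWRR dynamics you describe at node $3$ are reversed. MWRR selects $\arg\min_{i'} t_{i'}/w_{i'}$, so a heavier child is picked \emph{more} often, not less. In your instance, $b$ takes node $3$'s first pick because the leaf tie-break favours it ($1>\epsilon$). Then $a$ picks once, since its score $0$ is below $1/M$. After that $a$'s score is $1>1/M$, so $b$ takes every remaining pick of node $3$; even the tie at score $1$ would again go to $b$. With $T=M$, leaf $b$ receives $M-1$ of node $3$'s items and $a$ receives only one. Hence $v_3(\pi)=M-1+\epsilon$, while $\overline{v}_3(\pi(2)\setminus\{g\})=(M-1)\frac{M+\epsilon}{M+1}<M-1$. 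Since $w_2=w_3$, node $3$ does not envy node $2$ at all.

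Tuning $T$, $\epsilon$ or the value of the first item will not fix this, because the problem is structural. Within one level, weighted round robin hands out picks in proportion to the weights, and these are exactly the agnostic weights $W(x,3)$. Moreover, when node $3$'s children are leaves, every tied pick goes to the leaf valuing some remaining item most. That picker's realized value then dominates what its tied siblings would get from node $2$'s later items. Your tree has height $2$, i.e.\ it is a bilevel instance. The paper explicitly remarks that the bilevel case still admits a constant-factor approximation, and that the unbounded gap appears only for trees of height at least three.

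The missing idea is to decouple \emph{who picks} for node $3$ from \emph{who values} node $2$'s items, using discreteness (very few items) rather than weight asymmetry. The paper uses the complete binary tree of height $3$: nodes $2,3$; below them the internal nodes $4,5$ and $6,7$; each of these has two leaves. It sets $w_i=|\leaves(i)|$ and has only three items. Leaves under $2$ value $g_1$ at $0$ and $g_2,g_3$ at $1$. Leaves under $6$ value $g_1$ at $1$ and $g_2,g_3$ at $0$. Leaves under $7$ value $g_1$ at $0$ and $g_2,g_3$ at $M$. MWRR first gives $g_2$ to leaf $8$. Node $3$'s only pick then goes, by \emph{lexicographic} tie-breaking between the internal nodes $6$ and $7$, to subtree $6$, whose leaf takes $g_1$; finally $g_3$ goes to leaf $10$. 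Thus $v_3(\pi)=1$. But subtree $7$ carries agnostic weight $1/2$, so $\overline{v}_3(\pi(2)\setminus\{g\})=M/2$, which forces $\alpha\le 2/M$. Your fallback of inserting an intermediate internal node is the right device. It only works, however, together with this valuation pattern (the favoured subtree values node $2$'s items at zero, the other at $M$) and with node $3$ receiving very few items; the weight-based mechanism should be dropped entirely.
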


\begin{proof}
Consider the tree in Fig.~\ref{fig:theorem_inapproximability_tree}, and assume $w_i = |\leaves(i)|$ for all  $i \in \nodes$. Let $\items = \{g_1, g_2, g_3\}$ be the set of items. Consider the following utilities: for any $x \in \leaves(2)$, $u_x(\{g_1\}) = 0$ and $u_x(\{g_2\}) = u_x(\{g_3\}) = 1$ ; for any $x \in \children(6)$, $u_x(\{g_1\}) = 1$ and $u_x(\{g_2\}) = u_x(\{g_3\}) = 0$ ; and for any $x \in \children(7)$, $u_x(\{g_1\}) = 0$ and $u_x(\{g_2\}) = u_x(\{g_3\}) = M$ for $M>0$ arbitrarily large.

Let $\pi$ be the allocation returned by MWRR on this instance: $\pi(8) = \{g_2\}, \pi(10) = \{g_3\}, \pi(12) = \{g_1\}$. Note that we have $v_3(\pi) = 1$, and $\overline{v}_3(\pi(2) \setminus \{g\}) = W(14, 3)\times M + W(15, 3) \times M = \frac{M}{2}$ for any $g \in \pi(2)$.  Hence, for $\pi$ to be an $\alpha$-approximation of M[agno]-WEF1, we need in particular $v_3(\pi)/w_3 \ge \alpha \cdot \overline{v}_3(\pi(2)\setminus\{g_1\})/w_2$. This requires $\alpha \le 2/M$. Since $M$ can be arbitrarily large, $\alpha$ can be arbitrarily small.
    
\end{proof}

 \vspace{-0.9cm}
 
\begin{figure}[h!]
\centering
\begin{tikzpicture}[
    scale=0.6,
    transform shape,
    level distance=12mm,
    every node/.style={circle, draw, inner sep=2pt, font=\Large},
    level 1/.style={sibling distance=50mm},
    level 2/.style={sibling distance=28mm},
    level 3/.style={sibling distance=9mm},
    edge from parent/.style={draw, -stealth}
]
\node {1}
    child { node {2}
        child { node {4}
            child { node {8} }
            child { node {9} }
        }
        child { node {5}
            child { node {10} }
            child { node {11} }
        }
    }
    child { node {3}
        child { node {6}
            child { node {12} }
            child { node {13} }
        }
        child { node {7}
            child { node {14} }
            child { node {15} }
        }
    };
\end{tikzpicture}
\caption{Tree of Theorem~\ref{theorem: inapproximability}}
\label{fig:theorem_inapproximability_tree}
\end{figure}
\vspace{-0.8cm}

This result is particularly striking when compared to the $\frac{1}{3}$-approximation of \cite{Scarlett_et_al2023} in the bilevel setting. It highlights a fundamental gap between bilevel instances and multilevel trees of height at least three. While the bilevel case still admits an $\alpha$-approximation for some $\alpha>0$, this is no longer possible in the multilevel setting.

\subsection{Experimental results} \label{subsection: numerical results}

In this section, we run experiments to test how often MWRR fails to return a fair allocation. We show that, while it may frequently violate the optimistic notion, it is almost always fair under the agnostic one (an experimental finding that significantly refines the negative result of Theorem~\ref{theorem: mwrr not agno}). We describe hereafter the experimental protocol (hardware details are provided in Appendix~C).

\vspace{0.1cm}

\noindent \textbf{Trees.} We consider three classes of trees in our experiments: (1) balanced binary trees, (ii) comb trees, and (iii) partially unbalanced trees, which are binary except at the last internal level: for any pair of siblings at this level, one internal node has two children while the other has five. For balanced and comb trees, we tested for $n=\{15, 31, 63, 127\}$, and for partially unbalanced trees, we tested for $n=\{21, 43, 87, 175\}$. Since the number of nodes varies between comb/balanced trees and partially unbalanced trees, we will refer to those number of nodes as \{small, medium-, medium+, large\}. In all cases, we tested for $m=\{n, 2n\}$ items. 
Moreover, we evaluate two weighting schemes: (i) each node $i \in \nodes$ is assigned weight $w_i = |\leaves(i)|$, or (ii) the weight of each node is sampled independently and uniformly at random from the integer interval $[1,6]$. In the tables below, we denote (i) by LW for leaf-count weights and (ii) by RW for random weights.

\vspace{0.1cm}

\noindent \textbf{Preference generation.}  To ensure a robust experimental evaluation, we generate preferences using four different methods. Specifically, we propose multilevel adaptations of (i) the Mallows model, (ii) the resampling Dirichlet model recently introduced by \cite{Bohm_et_al2026}, (iii) cost utilities \cite{Botan_et_al2023}, and (iv) correlated utilities as in \cite{Dickerson_et_al2014}. Importantly, our results are mostly consistent across all generation methods, which is why we do not devote extensive space to their detailed presentation in the main paper. The generation methods are nevertheless detailed in Appendix~C. 

\vspace{0.1cm}
\noindent \textbf{Results.} For each class of instances, we report the 95\% confidence interval for the proportion of cases in which the MWRR algorithm produces an M[agno]-WEF1 or an M[opt]-WEF1 allocation. We also provide the average running time of MWRR and its standard deviation. All results are based on 200 instances per class. Due to space constraints, some tables are deferred to the appendix.

\vspace{0.1cm}
\noindent \textit{Running time.} Experimental results show that MWRR is extremely fast and scales well (see Table~\ref{tab:runtime_all}). For instance, for balanced trees, the average running time for $n=15, m=15$ (over all generation methods, all weighting schemes) is 0.0002 second, while that for $n=127, m=254$ is 0.0191 second. Moreover, the running time is consistent across different types of instances.

 \vspace{-1cm}
\begin{table}[H]
    \footnotesize
    \centering
    \caption{Average running time (s) $\pm$ std.}
    \label{tab:runtime_all}
    \resizebox{0.8\textwidth}{!}{
    \begin{tabular}{c|c|c|c|c}
        \toprule
        $n$ & $m$ & \textbf{Comb} & \textbf{Balanced} & \textbf{Partially unbalanced} \\
        \midrule
        small & $\quad n \quad$ & $\quad0.0002 \pm 0.0001\quad$ & $\quad0.0002 \pm 0.0000\quad$ & $0.0003 \pm 0.0001$ \\
        small & $2n$ & $0.0003 \pm 0.0001$ & $0.0003 \pm 0.0001$ & $0.0007 \pm 0.0001$ \\
        medium- & $n$ & $0.0006 \pm 0.0003$ & $0.0005 \pm 0.0001$ & $0.0010 \pm 0.0001$ \\
        medium- & $2n$ & $0.0012 \pm 0.0006$ & $0.0011 \pm 0.0001$ & $0.0025 \pm 0.0003$ \\
        medium+ & $n$ & $0.0027 \pm 0.0018$ & $0.0017 \pm 0.0001$ & $0.0038 \pm 0.0003$ \\
        medium+ & $2n$ & $0.0057 \pm 0.0034$ & $0.0041 \pm 0.0003$ & $0.0099 \pm 0.0012$ \\
        large & $n$ & $0.0158 \pm 0.0126$ & $0.0067 \pm 0.0004$ & $0.0155 \pm 0.0015$ \\
        large & $2n$ & $0.0387 \pm 0.0296$ & $0.0191 \pm 0.0034$ & $0.0493 \pm 0.0085$ \\
        \bottomrule
    \end{tabular}
    }
\end{table}
 \vspace{-0.7cm}

\vspace{0.1cm}
\noindent \textit{M[agno]-WEF1.} Interestingly, our experiments strongly mitigate the impossibility results of Section~\ref{section: general additive valuations}. Although we cannot formally guarantee that MWRR always returns an M[agno]-WEF1 allocation, we observe that it does so in almost all cases. Out of the 192,000 generated instances, fewer than 50 result in allocations that are not M[agno]-WEF1. The few instances that led to unfair allocations were in general for comb trees with random weights (tables can be found in the appendix). This suggests that MWRR is extremely likely, in practice, to return a fair allocation in the agnostic sense.

\vspace{0.1cm}
\noindent \textit{M[opt]-WEF1.} In contrast, the optimistic notion of fairness appears significantly more challenging to satisfy (see Table~\ref{tab:fair2_all}). The performance of MWRR varies substantially depending on the tree structure, the size of the instance, and whether weights are randomly generated. For example, instances based on comb trees with large $n$ and random weights are particularly challenging, with sometimes near 100\% of computed allocations failing to satisfy M[opt]-WEF1. On the other hand, for balanced trees with weights defined as $w_i = |\leaves(i)|$ for all $i \in \nodes$, achieving M[opt]-WEF1 appears somewhat less challenging. Nevertheless, the proportion of unfair allocations remains highly variable and can still be substantial. Overall, the choice of weights (random or related to the number of leaves) emerges as the most influential factor affecting fairness performance.
 
\vspace{-1cm}
\begin{table}[H]
    \centering
    \footnotesize
    \caption{\centering Proportion of non-M[opt]-WEF1 allocations (95\% CI).}
    \label{tab:fair2_all}
    \resizebox{0.8\textwidth}{!}{
    \begin{tabular}{ll|cc|cc|cc}
        \toprule
        \multicolumn{2}{c}{} 
        & \multicolumn{2}{c}{\textbf{Comb}} 
        & \multicolumn{2}{c}{\textbf{Balanced}} 
        & \multicolumn{2}{c}{\textbf{Part. unbal.}} \\
        \cmidrule(lr){3-4} \cmidrule(lr){5-6} \cmidrule(lr){7-8}
        $n$ & $m$ 
        & RW & LW 
        & RW & LW 
        & RW & LW \\
        \midrule
        small & $n$ & $[0.25,\,0.27]$ & $[0.02,\,0.03]$ & $[0.02,\,0.03]$ & $[0.03,\,0.04]$ & $[0.05,\,0.06]$ & $[0.04,\,0.05]$ \\
        small & $2n$ & $[0.24,\,0.27]$ & $[0.01,\,0.02]$ & $[0.02,\,0.03]$ & $[0.02,\,0.03]$ & $[0.06,\,0.07]$ & $[0.04,\,0.05]$ \\
        medium- & $n$ & $[0.53,\,0.56]$ & $[0.02,\,0.03]$ & $[0.09,\,0.11]$ & $[0.09,\,0.11]$ & $[0.14,\,0.17]$ & $[0.12,\,0.14]$ \\
        medium- & $2n$ & $[0.57,\,0.60]$ & $[0.01,\,0.01]$ & $[0.08,\,0.09]$ & $[0.09,\,0.10]$ & $[0.14,\,0.16]$ & $[0.11,\,0.12]$ \\
        medium+ & $n$ & $[0.70,\,0.73]$ & $[0.01,\,0.02]$ & $[0.21,\,0.23]$ & $[0.21,\,0.23]$ & $[0.29,\,0.32]$ & $[0.22,\,0.25]$ \\
        medium+ & $2n$ & $[0.71,\,0.74]$ & $[0.01,\,0.01]$ & $[0.18,\,0.20]$ & $[0.12,\,0.15]$ & $[0.27,\,0.30]$ & $[0.16,\,0.18]$ \\
        large & $n$ & $[0.74,\,0.77]$ & $[0.01,\,0.01]$ & $[0.40,\,0.43]$ & $[0.37,\,0.40]$ & $[0.48,\,0.51]$ & $[0.38,\,0.41]$ \\
        large & $2n$ & $[0.91,\,0.94]$ & $[0.02,\,0.04]$ & $[0.55,\,0.61]$ & $[0.28,\,0.34]$ & $[0.62,\,0.69]$ & $[0.51,\,0.57]$ \\
        \bottomrule
    \end{tabular}
    }
\end{table}
\vspace{-1cm}

\section{Conclusion}
In this paper, we consider multilevel fair division problems where indivisible goods are allocated to agents organized in a tree-structured hierarchy. We assume that internal nodes have additive utilities over their children, while leaves have additive utilities over items. 
In this setting, we propose three envy-based fairness notions, depending on how a node estimates the value of a bundle, namely M[pess]-WEF1, M[agno]-WEF1, and M[opt]-WEF1. Under identical preferences, these notions coincide, and can be satisfied together with completeness using MWRR. Under general valuations, MWRR  guarantees only M[pess]-WEF1 but still performs very well in practice for M[agno]-WEF1. 

A direct extension concerns the existence of M[agno]-WEF1 allocations. In Section~\ref{section: general additive valuations}, we showed that under general additive valuations, such an allocation may not exist. However, this relies on non-normalized utilities: after normalization, the instance admits an M[agno]-WEF1 allocation, and existence under normalized utilities remains open. Still, this does not resolve the issue with MWRR, which may fail to return an M[agno]-WEF1 allocation even when one exists (as Example~\ref{fig:example3-tree} is already normalized).

A natural extension of this work is to consider broader utility functions for leaves, and alternative utilities for internal nodes. In particular, instead of purely aggregating children’s utilities, internal nodes could incorporate their own preferences, as in \cite{Bu_et_al2024}, e.g., preferring balanced allocations across subgroups.

\begin{credits}
\subsubsection{\ackname} This work is supported by the ANR project ANR-24-CE23-1700 AROMATICS.
\end{credits}

\bibliographystyle{splncs04}
\bibliography{bibliography}

\section*{A. Missing proofs from Section~3}

We start by proving Theorem~2.

\setcounter{theorem}{1}
\begin{theorem}
    Under all-common valuations and $w_i = |\leaves(i)|, \forall i \in \nodes$, there exists a polynomial-time algorithm that returns an allocation that is M-WEF1 and EFX between all leaves in $\leaves$.
\end{theorem}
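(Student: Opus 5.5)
Under all-common valuations every leaf has the same additive utility $u$. The remark in Section~3 then says that a node's utility depends only on which items it holds: $v_i(\pi)=u(\pi(i))$ for every node $i$. So M-WEF1 reduces to the following condition at every internal node $i$ and every pair $j,k\in\children(i)$:
\[
\frac{u(\pi(j))}{|\leaves(j)|}\ge \frac{u(\pi(k)\setminus\{g\})}{|\leaves(k)|}\quad\text{for some } g\in\pi(k).
\]
My plan is a two-phase construction in the spirit of the proof of Theorem~3.9 in \cite{Scarlett_et_al2023}:
\begin{enumerate}[label=(\roman*)]
\item First run MWRR (Algorithm~\ref{algo: mwrr}) with $w_i=|\leaves(i)|$. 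By Theorem~\ref{theorem: mwrr mwef1} its output is M-WEF1, since all-common is a special case of root-child-common. By the corollary after that theorem, it is also EF1 between all leaves.
\item Then repair EF1 among the leaves into EFX without breaking M-WEF1.
\end{enumerate}

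The key observation is structural. With identical utilities and leaf-count weights, MWRR is exactly a round robin over the leaves. Items are taken in globally decreasing order of $u$, and in each "round" of $|\leaves|$ consecutive picks every leaf receives one item. Indeed, $t_i/|\leaves(i)|$ equalises across siblings: a node with $\ell$ leaves is picked $\ell$ times per global round. I would prove this by induction on the height, showing that after $r|\leaves|$ picks each node $i$ has $t_i=r|\leaves(i)|$. Consequently every leaf holds one item per round, and any two leaves differ in bundle size by at most one.

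Rather than repairing afterwards, I would instead get EFX directly by choosing which leaf receives which item inside each round. I treat the rounds as consecutive blocks $B_1,B_2,\dots$ of $|\leaves|$ items each (the last block possibly partial), sorted in decreasing value. Within each block I assign the items to leaves so that leaves that are currently poorer receive the better items of the block. This is the identical-valuation EFX algorithm: give the next largest item to the currently least-valued leaf, which is exactly the all-common procedure of Theorem~1 restricted to leaves. For the last partial block, I let the MWRR tree traversal decide which leaves receive an item, and within that set I again give the better items to the poorer leaves.

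Next I verify M-WEF1. Every internal node $i$ receives exactly $r|\leaves(i)|$ items, one per leaf, from each full block. So for siblings $j,k$ and each full block $B_s$, I compare the per-leaf average value of $j$'s share of $B_s$ with $k$'s share of $B_{s+1}$. Because blocks are sorted, every item of $B_s$ is worth at least every item of $B_{s+1}$. Telescoping, as in the WRR proof of \cite{Chakraborty_et_al_2020}, gives
\[
\frac{u(\pi(j))}{|\leaves(j)|}\ge\frac{u(\pi(k)\setminus\{g\})}{|\leaves(k)|},
\]
where $g$ is $k$'s most valuable item from the first block. The main obstacle is this step: $k$ takes $|\leaves(k)|$ items from $B_1$, not just one. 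Removing a single good therefore only cancels $k$'s block-one advantage when $|\leaves(k)|=1$. For general $k$ I need a sharper count: compare $j$'s items from $B_s$ against $k$'s items from $B_s$ other than the ones it takes ahead of $j$. This uses the fact that the MWRR pick order inside a block interleaves siblings proportionally to their leaf counts, so $k$ is ahead of $j$ by at most one pick at any time. The within-block reassignment preserves each node's multiset of picks, hence its bundle value, so the M-WEF1 guarantee from Theorem~\ref{theorem: mwrr mwef1} still holds for the original MWRR pick sequence. I would therefore let the tree traversal fix which subtree receives each item, and reassign only among leaves that are siblings. EFX among all leaves then needs a separate argument across non-sibling leaves, and I expect this to be the main technical difficulty.
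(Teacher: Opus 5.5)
There is a genuine gap, and it is not only the cross-subtree case you defer: the constraints you inherit from MWRR are incompatible with EFX. Under identical valuations with $w_i=|\leaves(i)|$, MWRR gives every leaf one item per round, so every node $i$ receives $|\leaves(i)|$ items per round. EFX, by contrast, may require the leaf holding a large good to receive nothing else.

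Your block rule (one item per leaf per block) already fails on a star with two leaves and goods valued $10,1,1,1$. It outputs $\{10,1\}$ versus $\{1,1\}$, and the second leaf still envies after the $1$ is removed. The only EFX allocations give the $10$ to a leaf alone.

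Your fallback fails too: let MWRR fix which subtree gets each item, then permute only among siblings. Take a root with two internal children, each with two leaves, and goods $10$ plus eleven $1$'s. MWRR sends exactly six goods to each subtree. The $10$-holder must then be alone, so its sibling holds five $1$'s. Some leaf of the other subtree holds at most three $1$'s, so it still envies after one $1$ is removed, and EFX fails.

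So the ``separate argument across non-sibling leaves'' cannot be supplied. Your M-WEF1 argument is also unfinished: you flag the $|\leaves(k)|>1$ obstacle and then appeal to MWRR's guarantee, which does not cover comparisons involving leaves you have reassigned.

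The missing idea is to decouple the two requirements instead of building EFX inside MWRR's pick sequence. First ignore the tree and compute an EFX allocation $A$ among all leaves, e.g.\ by sequential maximin: goods in decreasing order, each to the currently poorest leaf. Since valuations are identical, any permutation of these bundles among the leaves remains EFX. The tree therefore only has to decide which leaf gets which bundle.

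Do that by running MWRR on $|\leaves|$ representative goods $r_x$ with $\hat{u}(r_x)=u(A(x))-\min_y u(A(y))$. With leaf-count weights, every leaf receives exactly one. M-WEF1 for these goods (Theorem~\ref{theorem: mwrr mwef1}) transfers to the real goods for two reasons:
\begin{itemize}
\item EFX gives $\hat{u}(r_x)\le u(g)$ for every $g\in A(x)$. So the inequality obtained after removing $k$'s most valuable representative good persists after removing a real good from the corresponding bundle.
\item Because $w_i=|\leaves(i)|$, subtracting $\min_y u(A(y))$ once per leaf lowers both sides of every per-leaf-average comparison by the same constant.
\end{itemize}
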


\begin{proof}
    The algorithm is called Sequential Maximin - Multilevel Weighted Round Robin, and is a multilevel extension of the SM-IWRR presented in \cite{Scarlett_et_al2023}. Pseudocodes can be found in Algorithms~\ref{algo: sm} and \ref{algo: sm-mwrr}. The pseudocode for MWRR can be found in the main paper.

    \begin{algorithm}[!h] \small
    \caption{\small Sequential Maximin (from \cite{Scarlett_et_al2023})}
    \label{algo: sm}
    \begin{algorithmic}[1]
        \State \textbf{Input:} $N$ - a set of agents ; $\items$ - a set of items ; $u$ - a common valuation
        \State \textbf{Output:} $A$ - an allocation
        \State Initialize $A(x) = \{\}$ for $x \in N$
        \State $\mathcal{RI} \gets \items$ \Comment{Initial set of remaining items}
        \While{$\mathcal{RI} \neq \emptyset$}
            \State $g \gets \arg \max_{g' \in \mathcal{RI}} u({g'})$
            \State $x \gets \arg \min_{y \in N} u(A(y))$
            \State $A(x) \gets A(x) \cup \{g\}$
            \State $\mathcal{RI} \gets \mathcal{RI} \setminus \{g\}$
        \EndWhile
        \State \Return $A$
    \end{algorithmic}
    \end{algorithm}
    
    \begin{algorithm}[!h] \small
    \caption{\small Sequential Maximin - Multilevel Weighted Round Robin (SM-MWRR)}
    \label{algo: sm-mwrr}
    \begin{algorithmic}[1]
        \State \textbf{Input:} $\tree$ - a multilevel tree ; $\items$ - a set of items ; $u$ - a common valuation
        \State \textbf{Output:} $\pi$ - a multilevel allocation
        \State Run the SM algorithm with input $\leaves$, $\items$, and valuation $u$, and obtain allocation $A' \in \allocations^\items_{\leaves}$
        \State $x_{min} \gets \arg \min_{y \in \leaves(i)} u(A(y))$
        \State $A'_{min} \gets A'(x_{min})$
        \State Initialize a set of representative goods, $\mathcal{RG} = \{\}$
        \For{$x \in \leaves$}
            \State Create a representative good $r_x$ such that $\hat{u}(r_x) = u(A'(x)) - u(A'_{min})$
            \State $\mathcal{RG} \gets \mathcal{RG} \cup \{r_x\}$
        \EndFor
        \State Run the MWRR with input $\tree$, items $\mathcal{RG}$, and identical valuation $\hat{u}$, and obtain the multilevel allocation $\hat{\pi}$
        \State Initialise multilevel allocation $\pi$ such that $\pi(1) = \items$ and $\pi(i) = \{\}$ for any $i \in \nodes \setminus \{1\}$
        \For{$i \in \nodes \setminus \{1\}$}
            \For{$x \in \leaves{1}$}
                \If{$r_x \in \hat{\pi}(i)$}
                    \State $\pi(i) \gets \pi(i) \cup A'(x)$
                \EndIf
            \EndFor
        \EndFor
        \State \Return $\pi$
    \end{algorithmic}
    \end{algorithm}
    
    The proof the allocation is EFX at the leaves is exactly the same than in Theorem~3.5 of \cite{Scarlett_et_al2023}. Moreover, we proved in Theorem~3 that the MWRR was M-WEF1 under root-child-common valuations. All-common valuations being a special case of root-child-common valuations, we get the same guarantee. Recall that under identical preferences, the three adaptations of M-WEF1 coincide. Hence, we focus on showing that the returned allocation is M-WEF1. We extend the proof of \cite{Scarlett_et_al2023} to apply to the multilevel setting. 

    After the SM algorithm ran at the leaves, which computed an EFX allocation between the leaves, denote $A \in \allocations^\items_{\leaves}$ the resulting allocation at the leaves, and $\pi$ the induced multilevel allocation, i.e. $\pi(x) = A(x)$ for any $x \in \leaves$, and $\pi(i) = \cup_{x \in \leaves(i)} A(x)$ for any $i \in \internal$. Consider the set of bundles $\{A_1, \ldots, A_{\ell}\}$, where $\ell = |\leaves|$. We relabel it so that $u(A_1) \geq u(A_2) \geq \ldots \geq u(A_{\ell})$. Note that we denote the utilities of the leaves by $u$ as they are identical. For any leaf $x \in \leaves$, we define the representative good value to be $\hat{u}(r_x) = u(A_x) - u(A_\ell)$, where $r_x$ is the representative good of bundle $A_x$. We make the following two claims:
    \begin{claim}[1]
        For all $x \in \leaves$, $\hat{u}(r_x)$ is upper-bounded by the value of any good in $A_x$.
    \end{claim}
    \begin{claim}[2]
        Given any node $i \in \internal$, and any two of its children $j, k \in \children(i)$. Denote $\hat{\pi}$ the multilevel allocation of representative goods allocated to each node in $\nodes$, resulting from allocation $A$ to the leaves. If $\hat{\pi}$ is a M-WEF1 allocation of representative goods, then it remains M-WEF1 by replacing the representative goods by their corresponding bundles.
    \end{claim}
    Claim 1 holds because the allocation $A$, hence $\pi$, is EFX between the leaves, and therefore, for any leaf $x \in \leaves$, and any item $g \in A_x$, we have $u(A_x \setminus \{g\}) \leq u(A_\ell)$. Then, as valuations are additive, $u(A_x) - u(\{g\}) \leq u(A_\ell)$, and hence $\hat{u}(r_x) = u(A_x) - u(A_\ell) \leq u(\{g\})$. 
    
    We now prove Claim~2. Assume we have a M-WEF1 allocation of the representative goods, and denote it $\hat{\pi}$. For any two children $j, k$ of node $i \in \internal$, there must exist a representative good $r_{\max} \in \hat{\pi}(k)$ such that $\hat{u}(r_{\max}) = \max_{y \in \leaves(k)} \hat{u}(r_y)$, and the following holds: $$\frac{\sum_{x \in \leaves(j)} \hat{u}(r_x)}{w_j} \geq \frac{\sum_{y \in \leaves(k)} \hat{u}(r_y) - \hat{u}(r_{\max})}{w_k}$$

    By the definition of a representative good, for any leaf $x \in \leaves(j)$, we have $\hat{u}(r_x) = u(A_x) - u(A_\ell)$, and since $A_\ell$ is the least-valued bundle of A, we get 
    \begin{align*}
        \frac{\sum_{x \in \leaves(j)} \hat{u}(r_x)}{w_j} & = \frac{\sum_{x \in \leaves(j)} u(A_x) - u(A_\ell)}{w_j} \\
        & = \frac{\sum_{x \in \leaves(j)} u(A_x)}{w_j} - u(A_\ell), \quad \text{since $w_j = |\leaves(j)|$}
    \end{align*}

    Moreover, the right-hand side, for the same reason, can be rewritten as :
    \[
    \frac{\sum_{y \in \leaves(k)} \hat{u}(r_y) - \hat{u}(r_{\max})}{w_k}
    = 
    \frac{\sum_{y \in \leaves(k)} u(A_y) - \hat{u}(r_{\max})}{w_k}
    - u(A_\ell)
    \]

    Since in this setting, $u(\pi(j)) = \sum_{x \in \leaves(j)} u(A_x)$ and $u(\pi(k)) = \sum_{y \in \leaves(k)} u(A_y)$, we get 
    % $$\frac{v_j(\pi)}{w_j} = \frac{\sum_{x \in \leaves(j)} u(A_x)}{w_j} \geq \frac{\sum_{y \in \leaves(k)} u(A_y) - \hat{u}(r_{\max})}{w_k} = \frac{v_k(\pi) - \hat{u}(r_{\max})}{w_k}$$

    \begin{align*}
    \frac{u(\pi(j))}{w_j} = & \frac{\sum_{x \in \leaves(j)} u(A_x)}{w_j} \geq \\[4pt] 
    & \frac{\sum_{y \in \leaves(k)} u(A_y) - \hat{u}(r_{\max})}{w_k} = \frac{u(\pi(k)) - \hat{u}(r_{\max})}{w_k}.
\end{align*}

    Finally, from Claim~1, we know $\hat{u}(r_{\max}) \leq u(g_{\max})$, where $g_{\max}$ is some good from bundle $\pi(k)$ because by definition of a representative good, if $\hat{u}(r_{\max}) = \max_{y \in \leaves(k)} \hat{u}(r_y)$, then $u(g_{\max}) \geq u(g)$ for any $g \in \pi(k)$. Hence, we get $$\frac{u(\pi(j))}{w_j} \geq \frac{u(\pi(k)) - \hat{u}(r_{\max})}{w_k} \geq \frac{u(\pi(k)) - u(g_{\max})}{w_k}$$

    In SM-MWRR, we first compute an allocation to the leaves that is EFX between the leaves, then we construct the representative goods sets and values. Then, MWRR is run on the given tree, but with the representative goods to allocate and leaves equipped with the representative good utility function, $\hat{u}$. Since all leaves have the same utility function, we know from Theorem~3 that the returned allocation, denoted $\hat{\pi}$, is M-WEF1 w.r.t. $\hat{u}$. By Claim~2, we know this implies that the allocation w.r.t. the true bundles of items and utility functions $u$. This concludes the proof. 
\end{proof}

We then prove Corollary~1. 

\setcounter{corollary}{0}
\begin{corollary}
    Under root-child-common valuations and $w_i = |\leaves(i)|, \forall i \in \nodes$, the MWRR is M-WEF1 and EF1 between all leaves in $\leaves$.
\end{corollary}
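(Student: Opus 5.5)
The M-WEF1 part is immediate from Theorem~\ref{theorem: mwrr mwef1}, which holds for arbitrary weights and in particular for $w_i = |\leaves(i)|$. So the whole task is to show EF1 between any two leaves $x, y \in \leaves$. Take $i$ to be their lowest common ancestor, with $x \in \leaves(a)$ and $y \in \leaves(b)$ for distinct children $a, b \in \children(i)$. I will split into two cases.

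\emph{Case 1: $i \neq 1$.} Then $x$ and $y$ lie in the same root-child subtree, so they have identical utilities $u$. The plan is to compare bundles by counting and ordering picks. By the minimum-weighted-picking-score rule with $w = |\leaves(\cdot)|$, each internal node picks its children in a way that keeps $t_c/|\leaves(c)|$ balanced. I would prove by induction on height that, within any internal node $j$, the leaves of $\tree_j$ are served in a round-robin-like order. Concretely, between two consecutive picks of any leaf of $\tree_j$, every other leaf of $\tree_j$ is picked at most once more, up to an offset of one.

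Items are taken in decreasing order of the common utility $u$, because every selected leaf picks its favourite remaining item and all of them rank items the same way. It then follows that the $k$-th item of $y$ is worth at most the $(k-1)$-th item of $x$. This gives $u(\pi(x)) \ge u(\pi(y)\setminus\{g\})$, where $g$ is $y$'s first item: the classical Round Robin EF1 argument.

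\emph{Case 2: $i = 1$.} Here preferences differ, so I would follow the proof of Theorem~3.9 in \cite{Scarlett_et_al2023}. Leaf $x$ picks, at each of its turns, its own favourite remaining item. So it suffices to show that between two consecutive picks of $x$, at most one item goes to $y$, up to an initial offset of one item. Given that, each item $y$ receives after its first item is matched with an earlier-or-simultaneous pick of $x$ that $x$ valued at least as much. The same counting lemma as in Case 1 supplies the interleaving, now applied along the two root-to-leaf paths. The weights $|\leaves(\cdot)|$ make every leaf's effective frequency equal, namely one pick per $|\leaves|$ rounds up to rounding.

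\emph{Main obstacle.} I expect the key difficulty to be this interleaving lemma: in MWRR with leaf-count weights, any two leaves alternate picks up to one. The worry is that lexicographic tie-breaking at different levels could let $y$ pick twice between consecutive picks of $x$ in unbalanced trees. To handle this, I would first try to show by induction on depth that $t_c$ always equals either $\lfloor t_j |\leaves(c)|/|\leaves(j)| \rfloor$ or its ceiling. I would then compose these bounds along the two paths. If the composition loses more than one item, I would fall back to the paper's Appendix argument, which adapts \cite{Scarlett_et_al2023} directly.
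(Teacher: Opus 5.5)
\paragraph{Verdict.} Your M-WEF1 step is fine. So is your final matching step: you pair the $(k+1)$-th item of $y$ with the $k$-th item of $x$, which $x$ chose while the former was still available, and this is exactly the round-robin argument the paper uses.

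\paragraph{The gap: the interleaving lemma.} This lemma carries the whole EF1 part, and you leave it unproven; the route you sketch for it does not work. The invariant $t_c\in\{\lfloor t_j|\leaves(c)|/|\leaves(j)|\rfloor,\lceil t_j|\leaves(c)|/|\leaves(j)|\rceil\}$ is false under lexicographic tie-breaking. Let the root have two leaf children $2,3$ and an internal child $4$ with two leaves, so $w_2=w_3=1$ and $w_4=2$. The first two iterations go to $2$ and then to $3$, since the tie between $3$ and $4$ is broken lexicographically. After two iterations $t_4=0$, although $2\cdot 2/4=1$. Moreover, suppose you had a correct ``within one of the proportional share'' bound at each level. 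Composing it along two root-to-leaf paths would only bound the discrepancy between two leaves by a quantity growing with the depth, not by one. Deferring to ``the paper's Appendix argument'' if this fails is not a proof.

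\paragraph{The missing idea: an exact round decomposition.} This is what the paper's proof rests on: it partitions the execution into $\lceil m/|\leaves|\rceil$ rounds in which every leaf receives exactly one item. You can prove it by induction on height using aligned blocks.
\begin{itemize}
\item Fix an internal node $p$, and suppose that at some moment $t_c=s|\leaves(c)|$ for every $c\in\children(p)$.
\item Suppose that during the next $|\leaves(p)|$ picks routed through $p$, some child $c$ were chosen for the $(|\leaves(c)|+1)$-th time. Its score would then be $s+1$.
\item Since $c$ is an argmin, every sibling $c'$ would already have score at least $s+1$, i.e.\ would already have received its $|\leaves(c')|$ picks in the block. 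Then $|\leaves(p)|$ picks of the block would already have occurred, a contradiction.
\item Hence each child gets exactly $|\leaves(c)|$ picks per aligned block of $p$, and these picks form an aligned block for $c$.
\item By induction, every leaf of $\tree_p$ is served exactly once per block, whatever the tie-breaking.
\end{itemize}
Applied at the root, this gives the rounds. The $(r+1)$-th item of $y$ is then picked in a later round than the $r$-th item of $x$, which exists because round $r$ is complete, and EF1 follows.

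\paragraph{Smaller points.}
\begin{itemize}
\item The argument never uses common valuations, so your case split by lowest common ancestor is unnecessary.
\item ``Items are taken in decreasing order of $u$'' holds only for the subsequence of picks made by leaves sharing $u$, not globally.
\item The condition you state in Case~2, that $y$ picks at most once between two consecutive picks of $x$, is stronger than you need and can fail. The order inside a round may change from round to round: the utility-based tie-breaking among sibling leaves can yield the sequence $x,y,y,x$. You only need that the $(k+1)$-th pick of $y$ comes after the $k$-th pick of $x$.
\end{itemize}
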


\begin{proof}
    From Theorem~3, we know that MWRR satisfies M-WEF1. Furthermore, the proof of Theorem~3.9 in \cite{Scarlett_et_al2023} readily extends to our setting and establishes EF1 among the leaves. Their argument proceeds as follows. 
    
    Consider a single execution of the algorithm and partition the resulting sequence of allocations into $\lceil \frac{m}{|\leaves|} \rceil$ rounds, where in each round every leaf receives exactly one item. At any given round, a leaf $x$ prefers her bundle to that of any other leaf $y$ who picks after her in the same round. Moreover, leaf $x$ prefers the item she selects in a given round to the item selected in the next round by any leaf $y$ who picked before her in the current round. Consequently, at the end of the algorithm, leaf $x$ prefers her final bundle to that of any leaf $y$ appearing after her in the picking order, and prefers her bundle to that of any leaf $y$ appearing before her, up to the removal of the first item allocated to $y$.  
\end{proof}

\section*{B. Missing proofs from Section~4}

We show that $\overline{v}_i(S)$ is indeed a convex combination of the $(u_x)_{x \in \leaves(i)}$ for any internal node $i \in \internal$.

\setcounter{lemma}{2}
\begin{lemma}
    For any internal node $i \in \internal$ and any bundle $S \subseteq \items$, the agnostic estimated utility function $\overline{v}_i(S)$ is a convex combination of the $(u_x)_{x \in \leaves(i)}$.
\end{lemma}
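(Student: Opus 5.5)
The plan is to write $\overline{v}_i(S) = \sum_{x \in \leaves(i)} W(x,i)\, u_x(S)$ by exchanging the two finite sums and using additivity of each $u_x$. The statement then reduces to two facts about the coefficients: $W(x,i) \ge 0$ for every $x \in \leaves(i)$, and $\sum_{x \in \leaves(i)} W(x,i) = 1$. Nonnegativity is immediate. Each factor $\frac{w_k}{w_{\children(\parent(k))}}$ is a ratio of a positive weight to a finite sum of positive weights that contains $w_k$ itself, so it lies in $(0,1]$, and a product of such factors does as well. All the work is therefore in the normalization identity, which is exactly the fact invoked in the proofs of Lemmas~\ref{lemma: opt > agno} and~\ref{lemma: agno > pess}.

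For the normalization we use the definition of $W$. The path from $i$ to a leaf $x \in \leaves(i)$ is $i = k_0, k_1, \ldots, k_r = x$. Since $\ancestors(x) \setminus \ancestors(i) = \{k_1, \ldots, k_r\}$, we get $W(x,i) = \prod_{s=1}^{r} \frac{w_{k_s}}{w_{\children(k_{s-1})}}$. We then prove the slightly more general claim
\[
\sum_{x \in \leaves(j)} W(x,j) = 1 \quad \text{for every node } j \in \nodes,
\]
by induction on the height $h(j)$, with the convention $W(x,x) = 1$ (an empty product). The base case $h(j) = 0$ means $j$ is a leaf, so $\leaves(j) = \{j\}$ and the sum equals $1$.

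For the inductive step, take an internal node $j$. The leaf sets $\leaves(c)$ for $c \in \children(j)$ partition $\leaves(j)$, and each child has $h(c) < h(j)$. For $x \in \leaves(c)$, the path from $j$ to $x$ passes through $c$, so the product factorizes as
\[
W(x,j) = \frac{w_c}{w_{\children(j)}}\, W(x,c).
\]
Summing gives
\[
\sum_{x \in \leaves(j)} W(x,j) = \sum_{c \in \children(j)} \frac{w_c}{w_{\children(j)}} \sum_{x \in \leaves(c)} W(x,c) = \sum_{c \in \children(j)} \frac{w_c}{w_{\children(j)}} = 1,
\]
using the induction hypothesis and the definition $w_{\children(j)} = \sum_{c \in \children(j)} w_c$. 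Taking $j = i \in \internal$ then yields the lemma.

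The main obstacle is bookkeeping rather than substance. We need to check carefully that $\ancestors(x) \setminus \ancestors(i)$ is exactly the set of path nodes strictly below $i$, which holds because $\ancestors$ includes the node itself. We also need the factorization through the child $c$ to be applied consistently, including when $c$ is itself a leaf, where $W(c,c) = 1$. Once that is settled, the per-item reading used in Lemma~\ref{lemma: opt > agno} follows directly: for each $g \in S$, $\sum_{x} W(x,i) u_x(g)$ is a convex combination of $(u_x(g))_{x \in \leaves(i)}$.
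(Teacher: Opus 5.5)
Your proof is correct and takes the same approach as the paper: you reduce the claim to $\sum_{x \in \leaves(i)} W(x,i) = 1$ by factoring $W(x,j) = \frac{w_c}{w_{\children(j)}} W(x,c)$ through the child $c$ on the path to $x$, then recurse down the tree. Your induction on height, using the empty product $W(x,x)=1$, is a rigorous version of the paper's nested-sum argument with ellipses, and you also make explicit the nonnegativity of the coefficients, which the paper leaves implicit.
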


\begin{proof}
    First, recall that we have $$\overline{v}_i(S) = \sum_{x \in \leaves(i)} \sum_{g \in S} u_x(g) \cdot W(x, i)$$

    We show that $\sum_{x \in \leaves(i)} W(x, i) = 1$.

    \begin{align*}
        \sum_{x \in \leaves(i)} W(x, i) & = \sum_{j \in \children(i)} W(j, i) \sum_{x \in \leaves(j)} W(x, j) \\
        & = \sum_{j \in \children(i)} W(j, i)\hspace{0.3cm} \ldots \sum_{k \in \children(\parent(k))} W(k, \parent(k)) \underbrace{\sum_{x \in \children(k)} W(x, k}_{=1}) \\
        & = \sum_{j \in \children(i)} W(j, i)\hspace{0.3cm} \ldots \underbrace{\sum_{k \in \children(\parent(k))} W(k, \parent(k)) \cdot 1}_{=1}) \\
        & = 1
    \end{align*} 
\end{proof}

We then provide the full proof of Theorem~4.

\setcounter{theorem}{3}
\begin{theorem}
    MWRR always returns a M[pess]-WEF1 allocation.
\end{theorem}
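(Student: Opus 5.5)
The plan is to reduce the statement to a fixed internal node and a fixed pair of siblings, and then adapt the counting argument for WRR from \cite{Chakraborty_et_al_2020}. Fix $i \in \internal$ and $j,k \in \children(i)$. For a bundle $S$ we have $\overset{\vee}{v}_j(S)=\sum_{g\in S} p_j(g)$, where $p_j(g)=\min_{x\in\leaves(j)}u_x(g)$, since the minimization splits item by item exactly as the maximization did in Lemma~\ref{lemma: opt > agno}. Because every item of $\pi(j)$ ends at exactly one leaf of $\tree_j$, we also have $v_j(\pi)=\sum_{g\in\pi(j)}u_{x(g)}(g)$, where $x(g)$ is the leaf that picked $g$.

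\textbf{Step 1 (local picking sequence).} Only iterations of MWRR whose selection path passes through $i$ matter. Restricted to these iterations, the choice among $\children(i)$ is exactly the WRR rule: pick the child minimizing $t/w$. Let $a_1<a_2<\dots$ be the iterations in which $j$ is chosen, with item values $\nu_\ell=u_{x(a_\ell)}(g_{a_\ell})$ for the picking leaf. Let $b_1<b_2<\dots$ be the iterations in which $k$ is chosen, with items $h_s$. At iteration $b_s$ we have $t_k=s-1$, and the argmin rule gives $(s-1)/w_k\le t_j/w_j$. Hence the number $N_s$ of picks of $j$ before $b_s$ satisfies $N_s\ge (s-1)w_j/w_k$.

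\textbf{Step 2 (key lemma).} If $a_\ell<b_s$, then $\nu_\ell\ge p_j(h_s)$. The reason is that at iteration $a_\ell$ the item $h_s$ was still remaining, and the chosen leaf $x\in\leaves(j)$ took an item maximizing $u_x$. So $\nu_\ell\ge u_x(h_s)\ge \min_{y\in\leaves(j)}u_y(h_s)$. Which leaf of $j$ is selected, and how ties are broken, is irrelevant here, since we only compare against the minimum over $\leaves(j)$. This is where the pessimistic estimate is essential: the same step fails for the agnostic or optimistic averages, consistent with Theorem~\ref{theorem: mwrr not agno}.

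\textbf{Step 3 (fractional charging).} We need
\[
\frac{1}{w_j}\sum_\ell \nu_\ell\;\ge\;\frac{1}{w_k}\sum_{s\ge 2}p_j(h_s),
\]
which, with $g=h_1$ removed, gives M[pess]-WEF1 for $j$ towards $k$. Give each pick $a_\ell$ a capacity $w_k/w_j$. Assign each $h_s$ with $s\ge2$ one unit of mass, spread over the picks $a_1,\dots,a_{N_s}$, processing $s$ in increasing order and filling earliest capacity greedily. The prefixes are nested, and the total capacity of the first $N_s$ picks is $N_s w_k/w_j\ge s-1$, which is exactly the mass of $h_2,\dots,h_s$. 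So the greedy assignment is feasible, by a Hall-type argument for nested sets.

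By Step 2, each unit of mass of $h_s$ sits on a pick $a_\ell$ with $a_\ell<b_s$, so $\nu_\ell\ge p_j(h_s)$. Summing over the assignment gives $\sum_{s\ge2}p_j(h_s)\le (w_k/w_j)\sum_\ell \nu_\ell$, which is the claim. The degenerate case where $k$ has at most one item is trivial.

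Applying this to every internal node (including the root) and every ordered pair of children concludes the proof. I expect the main obstacle to be Step 2 and its interaction with the multilevel selection. One must check two things carefully. First, between $a_\ell$ and $b_s$, items removed by nodes outside $\tree_i$ do not invalidate the comparison; they do not, since $h_s$ is still available at $a_\ell$. Second, the WRR inequality of Step 1 holds using only the counts $t_j,t_k$, which are incremented exactly when the path passes through those nodes.
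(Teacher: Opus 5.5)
Your proof is correct and follows essentially the same route as the paper. Step~1 is the paper's picking-ratio lemma ($t_j/t_i \ge w_j/w_i$ whenever $i$ is selected), and Step~2 is exactly its key Lemma~\ref{lemma: u>min_u} (the selected leaf's value dominates $\min_{y \in \leaves(j)} u_y(g')$ for every still-remaining item $g'$). The only difference is the final accounting: your greedy fractional charging over nested prefixes replaces the paper's Chakraborty-style induction with the $\max\{\beta_r,\ldots,\beta_{t_i}\}$ terms, and both yield $v_j(\pi) \ge \frac{w_j}{w_k}\,\overset{\vee}{v}_j(\pi(k)\setminus\{h_1\})$.
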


\begin{proof}

We show that MWRR computes an allocation $\pi \in \mallocations$ such that $\pi \vert_{\children(p)}$ is pessimistic WEF1 for any $p \in \internal$.

First, notice that for any $p$ such that $h(p) = 1$, the proof follows immediately from the proof of \cite{Chakraborty_et_al_2020} that their "least weight-adjusted frequent picker" procedure is WEF1. Indeed, at such node $p$ the children in $\children(p)$ are leaves and hence can compare directly their bundle with that on their sibling. Hence, MWRR acts exactly the same way than their algorithm. The only difference is in the tie-breaking scheme, which does not intervene in their proof.

Hence, our objective is to prove that the result holds for any $p \in \internal$ such that $h(p) \geq 2$. First, notice that the first $|\children(p)|$ picks are a simple round-robin, and each child receives one item. Hence, this first round is obviously pessimistic WEF1 as each child has at most one item at this point. Then, we focus on showing that the allocation computed remains WEF1 after the first pick.

\begin{lemma}[From \cite{Chakraborty_et_al_2020}]
    Consider an internal node $p \in \internal$ and one of its children $i \in \children(p)$ selected by MWRR at some iteration $t$, and suppose it is not $i$'s first pick. Let $t_i$ and $t_j$ be the numbers of times child $i$ and some other child $j \in \children(p)$ appear in the prefix of iteration $t$ (not including $t$ itself). Then, $\frac{t_j}{t_i} \geq \frac{w_j}{w_i}$.
\end{lemma}

\begin{proof}
    Since $i$ was picked at iteration $t$, it must be that $i \in \arg \min_{i' \in \children(p)} \frac{t_{i'}}{w_{i'}}$. Thus, for any $j \in \children(p)$, we have $\frac{t_i}{w_i} \leq \frac{t_j}{w_j}$. Hence, $\frac{t_j}{t_i} \geq \frac{w_j}{w_i}$.  
\end{proof}

\begin{lemma} \label{lemma: u>min_u}
    Consider MWRR at some iteration $t$ picked an internal node $i \in \internal$, and $x \in \leaves(i)$ was eventually selected to pick an item among the remaining items, denoted $\mathcal{RG}(t)$. Assume $x$ chooses $g \in \mathcal{RG}(t)$, then $$u_x(g) \geq \min_{y \in \leaves(i)} u_y(g'), \quad \forall g' \in \mathcal{RG}(t)$$
\end{lemma}

\begin{proof}
    Let $x \in \leaves(i)$ pick item $g \in \mathcal{RG}(t)$ among the remaining items at iteration $t$. Then, by construction, we have $$g \in \arg \max_{g' \in \mathcal{RG}(t)} u_x(g')$$

    Suppose by contradiction that $\exists g' \in \mathcal{RG}(t)$ such that $$u_x(g) < \min_{y \in \leaves(i)} u_y(g')$$

    Since $g \in \arg \max_{g' \in \mathcal{RG}(t)} u_x(g')$, it must be in particular that $$u_x(g') < \min_{y \in \leaves(i)} u_y(g')$$

    However since $x \in \leaves(i)$, it yields a contradiction. 
\end{proof}

We then proceed to show that, for any two children $i, j \in \children(p)$, MWRR computes an allocation such that $j$ is pessimistic weighted envy-free up to the first chosen item by $i$. To show this, we keep the same proof of \cite{Chakraborty_et_al_2020}, and  use our Lemma~\ref{lemma: u>min_u} at some point in their proof to be able to use the same argument adapted to our setting. 

\begin{lemma}[Mostly from \cite{Chakraborty_et_al_2020}] \label{lemma: pess wef1}
    Suppose that, for every iteration $t$ in which agent $i$ picks an item (i.e. one of its leaves eventually does) after her first pick, we have $t_i$ and $t_j$ for some other agent $j \in \children(p)$ satisfies $\frac{t_j}{t_i} \geq \frac{w_j}{w_i}$. Then, in the partial multilevel allocation up to and including $i$'s latest pick, agent $j$ is pessimistic weighted envy-free towards $i$ up to the first item picked. 
\end{lemma}

\begin{proof}
    Let $\gamma := \frac{w_j}{w_i}$. Consider any iteration $t$ in which agent $i$ is chosen after her first pick. Let agent $j$'s minimum values over its leaves in $\leaves(j)$ for the items allocated to agent $i$ in the latter's second, third, ..., $(t_i + 1)^\text{st}$ picks (the last one occuring at current iteration $t$ be $\beta_1, \beta_2, \ldots, \beta_{t_i}$ respectively. For instance, we have $\beta_1 = \min_{y \in \leaves(j)} u_y(g^2)$ where $g^2$ is the item selected at agent $i$'s second pick. Notice that $\beta$'s are different than those of \cite{Chakraborty_et_al_2020}.

    If $g^*$ is the first item picked by agent $i$ (i.e. by one if its leaves) and $\pi^t$ is the partial multilevel allocation up to and including iteration $t$, then clearly $\overset{\vee}{v}_j(\pi^t(i) \setminus \{g^*\}) = \sum_{x = 1}^{t_i} \beta_{t_i}$. 

    Let the number of times agent $j$ appears in the prefix of agent $i$'s second pick be $\tau_1$ ; that between agent $i$'s second and third picks be $\tau_2$ ; $\ldots$ ; that between agent $i$'s $t_i^{\text{th}}$ and $(t_i + 1)^{\text{th}}$ picks be $\tau_{t_i}$. 

    Let agent $j$'s values for the items she herself picked during phase $x \in [t_i]$ be $\alpha^x_1, \alpha^x_2, \ldots, \alpha^x_{\tau_x}$ (i.e. between agent $i$'s $x^\text{th}$ and $(x+1)^{\text{th}}$ picks, agent $j$'s received $\tau_x$ items). Notice that here when we say "agent $j$'s values for the items", we actually mean that for each item, we call $\alpha^x_k$ the utility of the leaf that picked the item at iteration $k$ during phase $x$. Then, we have $v_j(\pi^t) = \sum_{x = 1}^{t_i} \sum_{y = 1}^{\tau_x} \alpha^x_{y}$.

    Let, for $r \in [t_i]$, $\sum_{x=1}^{r} \tau_x$ and $r$ be the numbers of times agents $j$ and $i$ appear in the prefix of the latter's $(r+1)^{\text{th}}$ pick respectively. The condition of the lemma that $\frac{t_j}{t_i} \geq \frac{w_j}{w_i}$ yields 
    \begin{equation}
        \sum_{x=1}^{r} \tau_x \geq r \gamma
    \end{equation}

    Note that while $\tau_1 \geq \gamma > 0$ (because each agent is selected once in the first $\children(p)$ picks), it can be that $\tau_x = 0$ for $x \in \{2, 3, \ldots, t_i\}$. It corresponds to the scenario where agent $i$ picked more than once without agent $j$ picking in between.

    Moreover, we know from Lemma~\ref{lemma: u>min_u} that every time agent $j$ was chosen, the item eventually picked by the selected leaf in $\leaves(j)$ has a value greater or equal than the worst utility over all leaves in $\leaves(j)$ for any of the remaining items at this iteration, including those eventually picked by agent $i$ later. Hence, if $\tau_x > 0$ for some $x \in [t_i]$, we have $$\alpha^x_y \geq \max \{\beta_x, \beta_{x+1}, \ldots, \beta_{t_i}\}, \quad \forall y \in [\tau_x]$$
    Note that this is where Lemma~\ref{lemma: u>min_u} is necessary to carry on with the same proof than \cite{Chakraborty_et_al_2020}. Summing over all $y$'s, we get
    \begin{equation}
        \sum_{y = 1}^{\tau_x} \alpha^x_y \geq \tau_x \max \{\beta_x, \beta_{x+1}, \ldots, \beta_{\tau_x}\}
    \end{equation}
    Note that Inequality (2) holds trivially for $\tau_x = 0$ since both sides are zero. Hence, it holds for any $x \in [t_i]$.

    Now, we claim the following for each $r \in [t_i]$ $$\sum_{x=1}^{r} \sum_{y=1}^{\tau_x} \alpha^x_y \geq \gamma \sum_{x = 1}^{r} \beta_x + \big(\sum_{x=1}^{r} \tau_x - r\gamma  \big) \max \{\beta_r, \beta_{r+1}, \ldots, \beta_{t_i}\}$$

    To prove the claim, we proceed by induction on $r$. For the base case $r = 1$, we obtain from Inequality~(2)
    \begin{align*}
        \sum_{y=1}^{\tau_1} \alpha^1_y & \geq \tau_1 \max \{\beta_1, \beta_2, \ldots, \beta_{t_i}\} \\
        & \geq \gamma \beta_1 + (\tau_1 - \gamma) \max \{\beta_1, \beta_2, \ldots, \beta_{t_i}\}
    \end{align*}
    where second line follows from $\gamma \max \{\beta_1, \beta_2, \ldots, \beta_{t_i}\} \geq \gamma \beta_1$.

    For the inductive step, assume the claim holds for $r-1$. We now prove it for $r$. 
    \begin{align*}
        \sum_{x = 1}^{r} \sum_{y = 1}^{\tau_x} &\alpha^x_y = \sum_{x=1}^{r=1} \sum_{y=1}^{\tau_x} \alpha^x_y + \sum_{y=1}^{\tau_r} \alpha^r_y \\
        & \geq \gamma \sum_{x=1}^{r-1} \beta_x + \big(\sum_{x =1}^{r-1} \tau_x - (r-1)\gamma \big) \max \{\beta_{r-1}, \beta_r, \ldots, \beta_{t_i}\} + \sum_{y = 1}^{\tau_r} \alpha^r_y \\
        & \geq \gamma \sum_{x=1}^{r-1} \beta_x + \big(\sum_{x =1}^{r-1} \tau_x - (r-1)\gamma \big) \max \{\beta_{r-1}, \beta_r, \ldots, \beta_{t_i}\} + \tau_r \max \{\beta_r, \ldots, \beta_{t_i}\} \\
        & \geq \gamma \sum_{x=1}^{r-1} \beta_x + \big(\sum_{x =1}^{r-1} \tau_x - (r-1)\gamma \big) \max \{\beta_r, \ldots, \beta_{t_i}\} + \tau_r \max \{\beta_r, \ldots, \beta_{t_i}\} \\
        & = \gamma \sum_{x=1}^{r-1} \beta_x + \big(\sum_{x =1}^{r} \tau_x - (r-1)\gamma \big) \max \{\beta_r, \ldots, \beta_{t_i}\} \\
        & = \gamma \sum_{x=1}^{r-1} \beta_x + \gamma \max \{\beta_r, \ldots, \beta_{t_i}\} + \big(\sum_{x =1}^{r} \tau_x - r \gamma \big) \max \{\beta_r, \ldots, \beta_{t_i}\} \\
        & \geq \gamma \sum_{x=1}^{r-1} \beta_x + \gamma \beta_r + \big(\sum_{x =1}^{r} \tau_x - r \gamma \big) \max \{\beta_r, \ldots, \beta_{t_i}\} \\
        & = \gamma \sum_{x=1}^{r} \beta_x + \big(\sum_{x =1}^{r} \tau_x - r \gamma \big) \max \{\beta_r, \ldots, \beta_{t_i}\} \\
    \end{align*}
    where second line follows from the inductive hypothesis. The third line comes from Inequality~(2). The forth line follows from $\sum_{x = 1}^{r-1} \tau_x - (r-1)\gamma \geq 0$ from Inequality~(1) and $\max \{\beta_{r-1}, \beta_r, \ldots, \beta_{t_i}\} \geq \max \{\beta_r, \ldots, \beta_{t_i}\}$. 
    
    This completes the induction.

    Now, let $r = t_i$, we get 
    \begin{align*}
        \sum_{x = 1}^{t_i} \sum_{y=1}^{\tau_{t_x}} \alpha^x_y & \geq \gamma \sum_{x = 1}^{t_i} \beta_x + (\sum_{x = 1}^{t_i} \tau_x - t_i \gamma) \beta_{t_i} \\
        & \geq \gamma \sum_{x = 1}^{t_i} \beta_x
    \end{align*}
    where second line follows from Inequality~(1).

    Hence, as $v_j(\pi) = \sum_{x=1}^{t_i} \sum_{y = 1}^{\tau_x} \alpha^x_y$ and $\overset{\vee}{v}_j(\pi(i) \setminus \{g^*\}) = \sum_{x = 1}^{t_i} \beta_x$, we get
    \begin{align*}
        & v_j(\pi) \geq \gamma \cdot \overset{\vee}{v}_j(\pi(i) \setminus \{g^*\}) \\
        \Leftrightarrow & \frac{v_j(\pi)}{w_j} \geq \frac{\overset{\vee}{v}_j(\pi(i)) \setminus \{g^*\}}{w_i}
    \end{align*}
    Hence, agent $j$ is pessimistic weighted envy-free towards $i$ up to the first item agent $i$ picked. 
\end{proof}

Hence, since Lemma~\ref{lemma: pess wef1} holds for any $p \in \internal$ and any of its children, it concludes the proof. 
    
\end{proof}

\section*{C. Additional experimental results}

We provide hereafter further details on the experimental results we obtain. 

\vspace{0.1cm}
\noindent \textbf{Protocol.} Experiments were conducted on a server equipped with two Intel Xeon E5-2690v3 CPUs running at 2.60GHz, and 192 GB of RAM (each experiment was run on 4 cores). The program is written in Python. All results were obtained over 200 instances.

\vspace{0.1cm}
\noindent \textbf{Preference generation.} In order to design a robust experimental protocol, we use four different methods to generate the preferences of our leaves. 

\vspace{0.05cm}
\noindent \textit{Mallows.} We adapt the Mallows model \cite{Mallows1957}, commonly used in voting theory, to a multilevel setting. Although it generates ordinal preferences, it remains useful for fair allocation when we want similar rankings over items across agents. The model relies on a central ranking and a dispersion parameter $\phi \in [0,1]$: when $\phi = 0$, rankings match the central ranking exactly, while $\phi = 1$ yields uniformly random rankings (Impartial Culture).

In the multilevel version, each internal node $i$ is assigned its own dispersion parameter $\phi_i$. Starting from a uniformly sampled root ranking, we recursively generate central rankings for each node: the children of node $i$ receive rankings drawn from a Mallows model centered at $i$’s ranking with parameter $\phi_i$. This process continues down the tree until all leaves are assigned rankings, which we convert into (non-normalized) utilities using the \texttt{pref\_voting} library.

This framework enables fine control over preference correlations across the hierarchy: low $\phi_i$ enforces similarity among siblings, while higher values introduce greater diversity.

\vspace{0.05cm}
\noindent \textit{Resampling-Dirichlet.} The second method we use was introduced in \cite{Bohm_et_al2026}. Our multilevel adaptation works as follows: we first generate randomly a central approval vector $V$, where $m*p$ items are approved, with $p \in \{0.3, 0.6, 0.9\}$ a probability. Then, starting at the root, we generate approval vector for each of its children -- if item $g$ is approved in $V$, then child $i$ approves it with probability $p$, and if $g$ is not approved, then $i$ approves it with probability $\phi \in \{0.2, 0.8\}$. We repeat this procedure until reaching the leaves. 

For each leaf $x$, we construct a utility vector $u_x$. Let $App$ be the ordered list of items approved by $x$. For a parameter $t \geq 0$, we assign utilities $u_x(g) = 10^{-10}$ for non-approved items, and $u_x(g) = \frac{2}{(j/|App| + 0.01)^t}$ for the $j$-th approved item. If no item is approved, one is selected uniformly at random. Finally, we sample a normalized utility vector from a Dirichlet distribution parametrized by this vector.
 
\vspace{0.05cm}
\noindent \textit{Cost utilities.} The third method, proposed in \cite{Botan_et_al2023}, assumes a utility vector describing the public value of each item (drawn uniformly at random). Each agent can either approve or disapprove an item. It she approves it, then her utility for the item is the public utility ; otherwise, she has no utility for it. We adapt it to the multilevel setting by computing from the root to the leaves, an approval vector for each node $i$ as follows: for any item $g \in \items$, if $\parent(i)$ approves it, then $i$ approves with probability $p \in \{0.3, 0.6, 0.9\}$, while if $\parent(i)$ disapproves it, then $i$ approves it with probability $\phi \in \{0.2, 0.8\}$.    

\vspace{0.05cm}
\noindent \textit{Correlated utilities.} The last method, proposed in \cite{Dickerson_et_al2014}, assumes again a central utility vector $V$, drawn uniformly at random. The utility of any agent $x$ for any item $g$ is then drawn from a normal distribution whose mean is $V(g)$, i.e. $u_x(g) \sim \mathcal{N}(V(g), \sigma_x)$ with $\sigma_x \in \{0.2, 0.8\}$ the standard deviation associated with agent $x$. Our adaptation to the multilevel setting follows the same logic: we draw uniformly at random a central vector. Starting at the root, we sample some utility vector for the children of the root, i.e. for each $i \in \children(1)$ and each item $g \in \items$, $u_i(g) \sim \mathcal{N}(V(g), \sigma_1)$. We repeat this process until reaching the leaves. 

\vspace{0.1cm}
\noindent \textbf{Results.} In the main body of the paper, we reported our experimental results and observed that, among the 192,000 generated instances, fewer than 50 led MWRR to return a non-M[agno]-WEF1 allocation. In this section, we provide the detailed tables corresponding to those instances. We observe that, across all preference generation methods, non-M[agno]-WEF1 allocations occur only for instances defined on comb trees with random weights (See Tables~\ref{tab:fair1_rwtrue_bal0_mallows}, \ref{tab:fair1_rwtrue_bal0_resampling}, \ref{tab:fair1_rwtrue_bal0_cost}, and \ref{tab:fair1_rwtrue_bal0_correlated}). Note, however, that such unfair allocations are not limited to this class of instances; they also arise in other settings (see, for example, the proof of Theorem~6).

% ── Mallows - RW=True  BALANCED=0  metric=fair1 ────────────────────
\begin{table}[htbp]
    \centering
    \caption{\centering Proportion of non-M[agno]-WEF1 allocations (95\% CI) - Mallows - RW - Comb tree. Columns: $(\varphi_1, \varphi_2)$.}
    \label{tab:fair1_rwtrue_bal0_mallows}
    \begin{tabular}{llcccc}
        \toprule
        $N$ & $M$ & $(2,2)$ & $(2,8)$ & $(8,2)$ & $(8,8)$ \\
        \midrule
        15 & 15 & $[0.00,\, 0.01]$ & $[0.00,\, 0.00]$ & $[0.00,\, 0.00]$ & $[0.00,\, 0.00]$ \\
        15 & 30 & $[0.00,\, 0.00]$ & $[0.00,\, 0.00]$ & $[0.00,\, 0.00]$ & $[0.00,\, 0.00]$ \\
        31 & 31 & $[0.00,\, 0.00]$ & $[0.00,\, 0.00]$ & $[0.00,\, 0.00]$ & $[0.00,\, 0.00]$ \\
        31 & 62 & $[0.00,\, 0.00]$ & $[0.00,\, 0.00]$ & $[0.00,\, 0.00]$ & $[0.00,\, 0.01]$ \\
        63 & 63 & $[0.00,\, 0.00]$ & $[0.00,\, 0.00]$ & $[0.00,\, 0.00]$ & $[0.00,\, 0.00]$ \\
        63 & 126 & $[0.00,\, 0.00]$ & $[0.00,\, 0.00]$ & $[0.00,\, 0.00]$ & $[0.00,\, 0.00]$ \\
        127 & 127 & $[0.00,\, 0.00]$ & $[0.00,\, 0.00]$ & $[0.00,\, 0.00]$ & $[0.00,\, 0.00]$ \\
        127 & 254 & $[0.00,\, 0.00]$ & $[0.00,\, 0.00]$ & $[0.00,\, 0.00]$ & $[0.00,\, 0.00]$ \\
        \bottomrule
    \end{tabular}
\end{table}

% ── Resampling-Dirichlet RW=True  BALANCED=0  metric=fair1 ────────────────────
\begin{table}[htbp]
    \centering
    \caption{\centering Proportion of non-M[agno]-WEF1 allocations (95\% CI) - resampling dirichlet - RW - Comb tree. Columns: $(p, \varphi)$.}
    \label{tab:fair1_rwtrue_bal0_resampling}
    \begin{tabular}{llcccccc}
        \toprule
        $N$ & $M$ & $(3,2)$ & $(3,8)$ & $(6,2)$ & $(6,8)$ & $(9,2)$ & $(9,8)$ \\
        \midrule
        15 & 15 & $[0.00,\, 0.05]$ & $[0.00,\, 0.01]$ & $[0.00,\, 0.02]$ & $[0.00,\, 0.01]$ & $[0.00,\, 0.00]$ & $[0.00,\, 0.00]$ \\
        15 & 30 & $[0.00,\, 0.02]$ & $[0.00,\, 0.00]$ & $[0.00,\, 0.00]$ & $[0.00,\, 0.00]$ & $[0.00,\, 0.00]$ & $[0.00,\, 0.00]$ \\
        31 & 31 & $[0.00,\, 0.02]$ & $[0.00,\, 0.01]$ & $[0.00,\, 0.00]$ & $[0.00,\, 0.01]$ & $[0.00,\, 0.00]$ & $[0.00,\, 0.00]$ \\
        31 & 62 & $[0.00,\, 0.00]$ & $[0.00,\, 0.00]$ & $[0.00,\, 0.00]$ & $[0.00,\, 0.00]$ & $[0.00,\, 0.00]$ & $[0.00,\, 0.00]$ \\
        63 & 63 & $[0.00,\, 0.00]$ & $[0.00,\, 0.00]$ & $[0.00,\, 0.00]$ & $[0.00,\, 0.01]$ & $[0.00,\, 0.00]$ & $[0.00,\, 0.00]$ \\
        63 & 126 & $[0.00,\, 0.00]$ & $[0.00,\, 0.00]$ & $[0.00,\, 0.00]$ & $[0.00,\, 0.00]$ & $[0.00,\, 0.00]$ & $[0.00,\, 0.00]$ \\
        127 & 127 & $[0.00,\, 0.00]$ & $[0.00,\, 0.00]$ & $[0.00,\, 0.00]$ & $[0.00,\, 0.00]$ & $[0.00,\, 0.00]$ & $[0.00,\, 0.00]$ \\
        127 & 254 & $[0.00,\, 0.00]$ & $[0.00,\, 0.00]$ & $[0.00,\, 0.00]$ & $[0.00,\, 0.00]$ & $[0.00,\, 0.00]$ & $[0.00,\, 0.00]$ \\
        \bottomrule
    \end{tabular}
\end{table}

% ── Cost-utilities - RW=True  BALANCED=0  metric=fair1 ────────────────────
\begin{table}[htbp]
    \centering
    \caption{\centering Proportion of non-M[agno]-WEF1 allocations (95\% CI) - Cost utilities - RW - Comb tree. Columns: $(p, \varphi)$.}
    \label{tab:fair1_rwtrue_bal0_cost}
    \begin{tabular}{llcccccc}
        \toprule
        $N$ & $M$ & $(3,2)$ & $(3,8)$ & $(6,2)$ & $(6,8)$ & $(9,2)$ & $(9,8)$ \\
        \midrule
        15 & 15 & $[0.00,\, 0.04]$ & $[0.00,\, 0.01]$ & $[0.00,\, 0.01]$ & $[0.00,\, 0.01]$ & $[0.00,\, 0.00]$ & $[0.00,\, 0.00]$ \\
        15 & 30 & $[0.00,\, 0.03]$ & $[0.00,\, 0.01]$ & $[0.00,\, 0.01]$ & $[0.00,\, 0.00]$ & $[0.00,\, 0.00]$ & $[0.00,\, 0.00]$ \\
        31 & 31 & $[0.00,\, 0.04]$ & $[0.00,\, 0.01]$ & $[0.00,\, 0.01]$ & $[0.00,\, 0.00]$ & $[0.00,\, 0.00]$ & $[0.00,\, 0.00]$ \\
        31 & 62 & $[0.00,\, 0.01]$ & $[0.00,\, 0.00]$ & $[0.00,\, 0.00]$ & $[0.00,\, 0.00]$ & $[0.00,\, 0.00]$ & $[0.00,\, 0.00]$ \\
        63 & 63 & $[0.00,\, 0.00]$ & $[0.00,\, 0.00]$ & $[0.00,\, 0.00]$ & $[0.00,\, 0.00]$ & $[0.00,\, 0.00]$ & $[0.00,\, 0.00]$ \\
        63 & 126 & $[0.00,\, 0.00]$ & $[0.00,\, 0.00]$ & $[0.00,\, 0.00]$ & $[0.00,\, 0.00]$ & $[0.00,\, 0.00]$ & $[0.00,\, 0.00]$ \\
        127 & 127 & $[0.00,\, 0.00]$ & $[0.00,\, 0.00]$ & $[0.00,\, 0.00]$ & $[0.00,\, 0.00]$ & $[0.00,\, 0.00]$ & $[0.00,\, 0.00]$ \\
        127 & 254 & $[0.00,\, 0.00]$ & $[0.00,\, 0.00]$ & $[0.00,\, 0.00]$ & $[0.00,\, 0.00]$ & $[0.00,\, 0.00]$ & $[0.00,\, 0.00]$ \\
        \bottomrule
    \end{tabular}
\end{table}

% ── Correlated preferences - RW=True  BALANCED=0  metric=fair1 ────────────────────
\begin{table}[htbp]
    \centering
    \caption{\centering Proportion of non-M[agno]-WEF1 allocations (95\% CI) - Correlated preferences - RW - Comb tree. Columns: $(\varphi_1, \varphi_2)$.}
    \label{tab:fair1_rwtrue_bal0_correlated}
    \begin{tabular}{llcccc}
        \toprule
        $N$ & $M$ & $(2,2)$ & $(2,8)$ & $(8,2)$ & $(8,8)$ \\
        \midrule
        15 & 15 & $[0.00,\, 0.00]$ & $[0.00,\, 0.00]$ & $[0.00,\, 0.00]$ & $[0.00,\, 0.00]$ \\
        15 & 30 & $[0.00,\, 0.00]$ & $[0.00,\, 0.00]$ & $[0.00,\, 0.00]$ & $[0.00,\, 0.00]$ \\
        31 & 31 & $[0.00,\, 0.00]$ & $[0.00,\, 0.00]$ & $[0.00,\, 0.00]$ & $[0.00,\, 0.00]$ \\
        31 & 62 & $[0.00,\, 0.00]$ & $[0.00,\, 0.00]$ & $[0.00,\, 0.00]$ & $[0.00,\, 0.00]$ \\
        63 & 63 & $[0.00,\, 0.00]$ & $[0.00,\, 0.01]$ & $[0.00,\, 0.00]$ & $[0.00,\, 0.00]$ \\
        63 & 126 & $[0.00,\, 0.00]$ & $[0.00,\, 0.00]$ & $[0.00,\, 0.00]$ & $[0.00,\, 0.00]$ \\
        127 & 127 & $[0.00,\, 0.00]$ & $[0.00,\, 0.00]$ & $[0.00,\, 0.00]$ & $[0.00,\, 0.00]$ \\
        127 & 254 & $[0.00,\, 0.00]$ & $[0.00,\, 0.00]$ & $[0.00,\, 0.00]$ & $[0.00,\, 0.00]$ \\
        \bottomrule
    \end{tabular}
\end{table}

\end{document}